\newtheorem{proposition}{Proposition} 
\newtheorem{definition}{Definition} 
\newtheorem{theorem}{Theorem} 
\newtheorem{corollary}{Corollary} 
\newtheorem{lemma}{Lemma} 
\newcommand{\interp}[1] {\left\llbracket #1 \right\rrbracket}
\newcommand{\annoted}[3]{\overbrace{#3}^{#2}\left.\vphantom{#3}\right\rbrace{\scriptstyle #1}}
\tikzstyle{none}=[inner sep=0mm]
\tikzstyle{every loop}=[]
\tikzstyle{gn}=[inner sep=0.5pt,minimum width=.3cm,minimum height=0pt,draw=black,shape=circle,fill=green,font=\footnotesize, align=center]
\tikzstyle{grn}=[inner sep=0.5pt,minimum width=.3cm,minimum height=0pt,draw=black,shape=circle,fill=gray!30,font=\footnotesize, align=center]
\tikzstyle{rn}=[inner sep=0.5pt,minimum width=.3cm,minimum height=0pt,draw=black,shape=circle,fill=red!70,font=\footnotesize, align=center]
\tikzstyle{H box}=[rectangle,fill=yellow,draw=black,xscale=1,yscale=1,font=\small,inner sep=0.75pt,minimum width=0.15cm,minimum height=0.15cm]
\tikzstyle{ug}=[regular polygon, regular polygon sides=3, fill=red,draw=black,inner sep = 0pt,minimum width=1em]
\title{The rational fragment of the ZX-calculus}
\author{Emmanuel Jeandel\\Universit\'e de Lorraine, CNRS, Inria, LORIA, F 54000 Nancy, France}
\begin{document}

\maketitle
\begin{abstract}
We introduce here a new axiomatisation of the rational fragment of the
ZX-calculus, a diagrammatic language for quantum mechanics. Compared
to the previous axiomatisation introduced in \cite{NormalForm}, our
axiomatisation does not use any meta-rule, but relies instead on a
more natural rule, called the cyclotomic supplementarity rule, that
was introduced previously in the literature.

Our axiomatisation is only complete for diagrams using rational
angles, and is not complete in the general case.
Using results on diophantine geometry, we characterize precisely
which diagram equality involving arbitrary angles are provable in our
framework without any new axioms, and we show that our axiomatisation
is continuous, in the sense that a diagram equality involving arbitrary
angles is provable iff it is a limit of diagram equalities involving
rational angles.

We use this result to give a complete characterization of all Euler
equations that are provable in this axiomatisation.
\end{abstract}  

\section*{Introduction}

The ZX-calculus is a graphical calculus for quantum computing
introduced by Coecke and Duncan \cite{interacting}, which represent
quantum evolutions (matrices) by \emph{diagrams}. The language is
powerful enough to represent in particular quantum circuits, and is
equipped with a set of transformation rules for diagrams so that one
may simplify a diagram without changing its interpretation (the matrix
it represents).

The initial set of rules as presented in \cite{interacting} is not
complete, in the sense that there exists two diagrams that represent
the same matrix but such that one could not be transformed into the other
using the set of rules.
Thi initial set of rules was studied and refined over the years
\cite{scalar-completeness,simplified-stabilizer,supplementarity} until some
complete axiomatisation were found for the ZX-calculus
\cite{ZXcomplete,Cliffordcomplete,HFW,NormalForm}.

The exact set of rules one should use depends on the particular
fragment of the ZX-calculus we need, i.e. the set of angles/rotations
one may use inside the diagrams.
Of particular interest are the $\pi/4$ fragment (an approximatively
universal fragment), the rational fragment (where most quantum
circuits, e.g. the Quantum Fourier transform
\cite{interacting,nielsen_chuang_2010}, may be expressed exactly), and the general fragment (where every matrix can be
represented).

In this article, we deal mainly with the rational fragment, for which
a complete axiomatisation was already given in
\cite{NormalForm}. However, the set of rules the authors found to
obtain a complete axiomatisation is arguably not satisfying, as it
either relies on using an additional generator (the triangle
generator), or on using so-called meta-rules. We solve this problem in
this article  by proving that one can use instead the generalized
cyclotomic rule, that was introduced previously in \cite{ZXCycl},
which is an easier rule to state and to use.

We then compare our axiomatisation with the axiomatisation of the full
fragment found in \cite{ZXcomplete}. This full axiomatisation relies
on adding only one new nonlinear rule, but this rule, while necessary,
is arguably hard to use and understand. To better understand  this
new rule, we investigate when it is necessary, by
characterizing exactly which diagram inequalities can be proven
without using this rule. We prove in particular that diagram
equalities provable without this rule are exactly the diagram
equalities that are ``limits'' (in some sense) of diagram equalities
involving only rational angles. Using this result and classical
results on rotations in the three-dimensional space, we will in the
last section give a complete list of all so-called Euler equations that can be
proven without this rule.

\section{Definitions and Statement of the Results}

\subsection{Syntax and Semantics}
Let $G$ be a subgroup of $\mathbb{R}$ that contains $2\pi$.
We call $G$ a \emph{fragment}. Of particular interest will be the
following fragments:
\begin{itemize}
  \item $G = \mathbb{Z}\frac{\pi}{4}$. As an abuse of notation, we will
    call $G$ the $\pi/4$ fragment.
  \item $G = \{ \frac{k\pi}{2^n}, k, n \in \mathbb{Z}\}$. We will
    call $G$ the $\frac{\pi}{2^\bullet}$ fragment.
  \item $G = \mathbb{Q}\pi$. This is the \emph{rational} fragment. We will call
    any angle of $G$ a  \emph{rational} angle.
  \item $G = \mathbb{R}$ This is the \emph{full} fragment (therefore
    not really a fragment).
\end{itemize}

Given a fragment $G$, a ZX diagram $D: k \rightarrow l$ with k inputs
and $l$ outputs is generated by the following arrows:

\begin{center}
\begin{tabular}{|cc|cc|}
\hline
$R_Z^{(n,m)}(\alpha):n\to m$ & \input figs/gn-alpha  &
$R_X^{(n,m)}(\alpha):n\to m$ & \input figs/rn-alpha \\
\hline
$H:1\to 1$ & \input figs/Hadamard  & $e:0\to 0$ & \begin{tikzpicture}[baseline=(current bounding box.center)]
\begin{pgfonlayer}{nodelayer}
\path[draw=,dashed=] (0.25, 0.75) rectangle (0.75, 1.25);
\node at (0.5,   1) {};

\end{pgfonlayer}
\begin{pgfonlayer}{edgelayer}

\end{pgfonlayer}
\end{tikzpicture}
 \\\hline
$\mathbb{I}:1\to 1$ & \begin{tikzpicture}[baseline=(current bounding box.center)]
\begin{pgfonlayer}{nodelayer}
\node [style=none] (0) at (0.5, 0.167) {};
\node [style=none] (1) at (0.5, 0.5) {};
\node [style=none] (2) at (0.5, 0.833) {};

\end{pgfonlayer}
\begin{pgfonlayer}{edgelayer}
\draw[out=270, in= 90] (1.center) to (0.center);
\draw[out=270, in= 90] (2.center) to (1.center);

\end{pgfonlayer}
\end{tikzpicture}
  & $\sigma:2\to 2$ &
\input figs/crossing \\\hline
$\epsilon:2\to 0$ & \input figs/cup  & $\eta:0\to 2$ & \input figs/cap  \\\hline
\end{tabular}

where $n,m\in \mathbb{N}$ and $\alpha \in G$. The generator $e$ is the empty diagram.
\end{center}
and the two compositions:

\begin{itemize}
\item Spacial Composition: for any $D_1:a\to b$ and $D_2:c\to d$, $D_1\otimes D_2:a+c\to b+d$ consists in placing $D_1$ and $D_2$ side by side, $D_2$ on the right of $D_1$.
\item Sequential Composition: for any $D_1:a\to b$ and $D_2:b\to c$, $D_2\circ D_1:a\to c$ consists in placing $D_1$ on the top of $D_2$, connecting the outputs of $D_1$ to the inputs of $D_2$.
\end{itemize}

The standard interpretation of the ZX-diagrams associates to any diagram $D:n\to m$ a linear map $\interp{D}:\mathbb{C}^{2^n}\to\mathbb{C}^{2^m}$ inductively defined as follows:\\
\rule{\columnwidth}{0.5pt}

\[ \interp{D_1\otimes D_2}:=\interp{D_1}\otimes\interp{D_2} \qquad 
\interp{D_2\circ D_1}:=\interp{D_2}\circ\interp{D_1}\]
\[\interp{ }:=\begin{pmatrix}
1
\end{pmatrix} \qquad
\interp{~ \ }:= \begin{pmatrix}
1 & 0 \\ 0 & 1\end{pmatrix}\qquad
\interp{~\input figs/Hadamard \ }:= \frac{1}{\sqrt{2}}\begin{pmatrix}1 & 1\\1 & -1\end{pmatrix}\]
$$\interp{\input figs/crossing }:= \begin{pmatrix}
1&0&0&0\\
0&0&1&0\\
0&1&0&0\\
0&0&0&1
\end{pmatrix} \qquad
\interp{\input figs/cup}:= \begin{pmatrix}
1&0&0&1
\end{pmatrix} \qquad
\interp{\input figs/cap}:= \begin{pmatrix}
1\\0\\0\\1
\end{pmatrix}$$

For any $\alpha\in G$, $\interp{\begin{tikzpicture}[xscale=0.5,yscale=0.4,baseline=(current bounding box.center)]
\begin{pgfonlayer}{nodelayer}
\node [style=gn,label={[label position=right]$\alpha$}] (0) at (0.5, 0.5) {};

\end{pgfonlayer}
\begin{pgfonlayer}{edgelayer}

\end{pgfonlayer}
\end{tikzpicture}
 }:=\begin{pmatrix}1+e^{i\alpha}\end{pmatrix}$, and
for any $n,m\geq 0$ such that $n+m>0$:
$$
\interp{\input figs/gn-alpha}:=
\annoted{2^m}{2^n}{\begin{pmatrix}
  1 & 0 & \cdots & 0 & 0 \\
  0 & 0 & \cdots & 0 & 0 \\
  \vdots & \vdots & \ddots & \vdots & \vdots \\
  0 & 0 & \cdots & 0 & 0 \\
  0 & 0 & \cdots & 0 & e^{i\alpha}
 \end{pmatrix}}
$$
\begin{minipage}{\columnwidth}
$$\interp{\input figs/rn-alpha}:=\interp{~\input figs/Hadamard ~}^{\otimes
  m}\circ \interp{\input figs/gn-alpha}\circ \interp{~\input
  figs/Hadamard ~}^{\otimes n}$$ \\
$\left(\text{where }M^{\otimes 0}=\begin{pmatrix}1\end{pmatrix}\text{ and }M^{\otimes k}=M\otimes M^{\otimes k-1}\text{ for any }k\in \mathbb{N}^*\right)$.\\
\rule{\columnwidth}{0.5pt}
\end{minipage}\\

In the particular case where we want to emphasize the phase group $G$
in which the angles $\alpha$ live, we will write $ZX_G$ instead of $ZX$.
\clearpage
\subsection{Rules for the $\frac{\pi}{2^\bullet}$ fragments}

\begin{figure}[htbp]
  \centering
  \makebox[0pt]{
  \begin{tabular}{|c|c|c|}
 \hline
 &&\\
 \framebox{
   \input figs/spider
}
&
 \framebox{
   \input figs/S2
}   
&
 \framebox{
   \input figs/S3
 }
 \\[4ex]
 \framebox{
   \input figs/B1
}   
& 
 \framebox{
   \input figs/B2
}   
 &
 \framebox{ 
   \input figs/E
 }\\[4ex]
\framebox{ 
  \input figs/C
}  
&
\framebox{
  \input figs/EU
}  
&
\framebox{
  \input figs/K
}
\\[4ex]
\framebox{
  \input figs/BW
}
&
\framebox{
  \input figs/H
}  
&
\framebox{
  \input figs/SUP
  }
\\
\hline
\end{tabular}
}
 \caption{Rules}
 \label{fig:dyadic}
\end{figure}  

The set of rules of Figure~\ref{fig:dyadic} has been introduced in
\cite{Cliffordcomplete} and has been proven complete for the $\pi/4$ fragment.
As subtler distinctions are not needed in this article, we will
suppose that all the rules of Figure~\ref{fig:dyadic} are given for
all possible, real, values of the angles $\alpha,\beta,\gamma$.

Given two diagrams $D_1, D_2$, we say that $ZX \vdash D_1 = D_2$ if
we can prove that the two diagrams are equal using only the axioms of
Figure~\ref{fig:dyadic}, along with some natural topological axioms
not presented for simplicity.

The theorem  of~\cite{Cliffordcomplete} can be rephrased as follows: If $D_1$ and
$D_2$ are two diagrams with angles only in the $\pi/4$ fragment, and
$\interp{D_1} = \interp{D_2}$, then $ZX \vdash D_1 = D_2$.
The result is actually a bit stronger, as we only to consider the
restrictions of the rules of Figure~\ref{fig:dyadic} to angles in the
$\pi/4$ fragment (said otherwise: to prove that two diagrams of the
$\pi/4$ fragment are equal, we do not need e.g. to introduce the angle
$\pi/8$ or any other angle). We write this for reference but we stress
again that this distinction is not needed in this paper.

Later in \cite{NormalForm}, it has been proven that the same set of rules is
actually complete for the $\pi/2^\bullet$ fragment.
In fact, using the proof methods from \cite{ZXcomplete}, one can prove
an even stronger result:
\begin{theorem}
  \label{thm:gendyadic}
  Let $D_1(\mathbf{\alpha})$ and
  $D_2(\mathbf{\alpha})$ be two ZX-diagrams linear in $\mathbf{\alpha} = \alpha_1, \dots
  \alpha_k$ with constants in $\frac{\pi}{2^\bullet} \mathbb{Z}$.

  If $\interp{D_1(\mathbf{\alpha})} = \interp{D_2(\mathbf{\alpha})}$ for
  all values of $\alpha$, then
  $ZX \vdash D_1(\mathbf{\alpha}) = D_2(\mathbf{\alpha})$ for all
  values of $\mathbf{\alpha}$.
\end{theorem}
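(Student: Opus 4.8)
The statement is the \emph{parametric} analogue of the completeness results recalled above: it asks for a single derivation schema valid for all $\mathbf\alpha$ at once, not just one derivation per numerical instance (note that, since a variable may occur many times in $D_j$, this cannot be reduced to a one-variable-one-occurrence situation). The plan is to follow the strategy of \cite{ZXcomplete} and establish a \emph{normal form for linear diagrams}: a map $\mathrm{NF}$ sending every ZX-diagram $D(\mathbf\alpha)$ linear in $\mathbf\alpha$ with constants in $\frac{\pi}{2^\bullet}\mathbb Z$ to another such diagram $\mathrm{NF}(D)$, with two properties: (a) $ZX$ proves $D(\mathbf\alpha)=\mathrm{NF}(D)$ by a derivation schema valid for all values of $\mathbf\alpha$; (b) $\mathrm{NF}(D)$ is determined by the function $\mathbf\alpha\mapsto\interp{D(\mathbf\alpha)}$ alone. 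Granting these, the theorem follows at once: if $\interp{D_1(\mathbf\alpha)}=\interp{D_2(\mathbf\alpha)}$ for all $\mathbf\alpha$, then $\mathrm{NF}(D_1)=\mathrm{NF}(D_2)$ as diagrams by (b), hence $ZX\vdash D_1(\mathbf\alpha)=\mathrm{NF}(D_1)=\mathrm{NF}(D_2)=D_2(\mathbf\alpha)$ for all $\mathbf\alpha$ by (a).

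For (b) I would argue as follows. If $D$ is linear in $\mathbf\alpha=\alpha_1,\dots,\alpha_k$ with dyadic constants, every phase of $D$ is an affine form $\langle\mathbf m,\mathbf\alpha\rangle+c$ with $\mathbf m\in\mathbb Z^k$ and $c\in\frac{\pi}{2^\bullet}\mathbb Z$, so every entry of the matrix $\interp{D(\mathbf\alpha)}$ is a trigonometric polynomial $\sum_{\mathbf m\in S}c_{\mathbf m}\,e^{i\langle\mathbf m,\mathbf\alpha\rangle}$ with $S\subset\mathbb Z^k$ finite (bounded in terms of $D$) and $c_{\mathbf m}\in\mathbb C$. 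The characters $\mathbf\alpha\mapsto e^{i\langle\mathbf m,\mathbf\alpha\rangle}$ for distinct $\mathbf m\in\mathbb Z^k$ are $\mathbb C$-linearly independent as functions on $\mathbb R^k$, so the finite data $(S,(c_{\mathbf m}))$ of each entry — hence of the whole matrix — is recovered from the function $\interp{D(\cdot)}$. It therefore suffices to design $\mathrm{NF}$ so that it encodes exactly this data, in the manner of the parametrised matrix normal forms of \cite{ZXcomplete}: a fixed skeleton (say, \textsc{cnot}s together with constant scalars) in which each exponent $\mathbf m$ that occurs is realised structurally by a phase gadget carrying the angle $\langle\mathbf m,\mathbf\alpha\rangle$ — built from $|m_i|$ copies of the $(1,1)$-node $R_Z(\pm\alpha_i)$, the residual scalar being pushed into the constant part — and each coefficient $c_{\mathbf m}$ by a constant scalar subdiagram written with angles in $\frac{\pi}{2^\bullet}\mathbb Z$. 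This $\mathrm{NF}(D)$ is again a linear diagram with dyadic constants and, by the above, depends only on $\interp{D(\cdot)}$.

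For (a) I would adapt the graph-theoretic normalisation underlying the completeness proofs of \cite{Cliffordcomplete,NormalForm} and run it \emph{symbolically} in $\mathbf\alpha$. First, spider fusion brings $D$ to a graph-like form and lets one pull every variable phase out as a $(1,1)$-node $R_Z(\ell(\mathbf\alpha))$; these are applications of rules of Figure~\ref{fig:dyadic} at angles that are linear expressions in $\mathbf\alpha$, hence valid uniformly in $\mathbf\alpha$. The steps of the dyadic procedure that branch on the \emph{numerical} value of a phase — eliminating a Clifford vertex by local complementation or pivoting, discarding a $0$-phase node, and so on — are applied only to the constant Clifford skeleton, never to the variable-phase gadgets, which for generic $\mathbf\alpha$ are non-Clifford; around this skeleton the dyadic procedure applies verbatim as a schema. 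What is left is to reorganise the resulting ``normalised skeleton $+$ inserted gadgets'' diagram into the canonical matrix-encoding form $\mathrm{NF}(\interp{D})$ of (b), and I expect this to be the main obstacle: the pointwise completeness of \cite{Cliffordcomplete,NormalForm} gives, for each dyadic instance $\mathbf a$, \emph{a} derivation $D(\mathbf a)=\mathrm{NF}(\interp{D})(\mathbf a)$, and the real work is to upgrade this family to one derivation schema valid for all $\mathbf\alpha$ — which is exactly what the parametric normal form of \cite{ZXcomplete} achieves. The only new point compared to \cite{ZXcomplete} is the bookkeeping for constants in $\frac{\pi}{2^\bullet}\mathbb Z$, which poses no difficulty since all rules of Figure~\ref{fig:dyadic} are available at every real angle.
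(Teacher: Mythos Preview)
Your proposal is correct and follows the same route the paper takes: the paper does not prove Theorem~\ref{thm:gendyadic} in-text but simply states that the parametric normal-form method of \cite{ZXcomplete} (proven there for the $\pi/4$ fragment) carries over verbatim to constants in $\frac{\pi}{2^\bullet}\mathbb Z$, and your sketch is precisely an outline of that method. Your honest flagging of the ``main obstacle'' --- upgrading a family of per-instance derivations to a single schema --- is exactly the content of the parametric normal form in \cite{ZXcomplete}, so there is nothing missing beyond what the paper itself defers to that reference.
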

By a diagram linear in $\mathbf{\alpha}$, we mean that every angle
appearing in the diagram is a linear combination with integer
coefficients of the angles $\alpha_i$, possibly with a constant term
in $\frac{\pi}{2^\bullet} \mathbb{Z}$. The exact definition may be
found in \cite{ZXcomplete}.
While the theorem was only proven for the $\pi/4$ fragment in
\cite{ZXcomplete}, it is easy to see that the same method of proof
gives the result for this larger fragment.

\subsection{The $\mathbb{Q}\pi$-fragment}

The set of axioms of Figure~\ref{fig:dyadic} is  not enough for
completeness of bigger fragments of the ZX-calculus and  additional
axioms are needed.

In \cite{NormalForm}, the authors have introduced a ``meta-axiom'' to obtain a
set of rules complete for the $\mathbb{Q}\pi$ fragment.

If one does not want meta-axioms, one can see from the same article that the set of axioms represented
in Figure~\ref{fig:cyc} is enough to obtain a complete set of rules:
If two diagrams $D_1, D_2$ with rational angles are equal, then
$ZX \cup \{ CYC_p, p \geq 3\} \vdash D_1 = D_2$.

\begin{figure}[h]
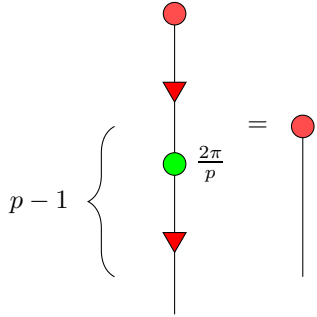

  \input figs/def-cyc
  \caption{The cyclotomic rules $(CYC_p)$.}
  \label{fig:cyc}
\end{figure}  

This set of rules relies on an additional generator, the triangle
generator, which is related to the right side of the $(BW)$ rule:
\begin{center}
\input figs/triangle
\end{center}
The standard interpretation of this new generator is the matrix
$\begin{pmatrix} 1 & 1 \\ 0 & 1\end{pmatrix}$.

\subsection{The general fragment}

Adding the cyclotomic axioms is again not enough to obtain
completeness for the general fragment. An additional axiom, presented
in Figure~\ref{fig:general} was introduced in \cite{ZXcomplete}, and
it is proven that this new axiom is sufficient to obtain a complete
axiomatisation (Another axiomatisation, relying on much more axioms, was also given
independently in \cite{HFW}).

\begin{figure}[h]
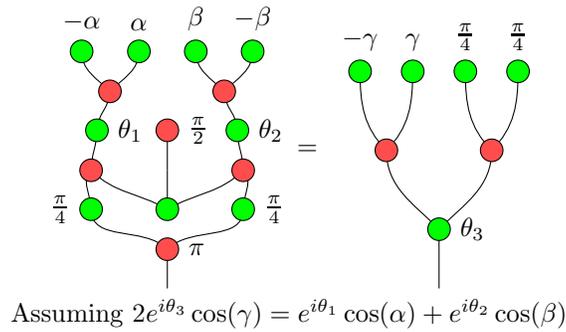

\begin{center}  
  \input figs/general
  
Assuming $2e^{i\theta_3} \cos(\gamma) = e^{i\theta_1}\cos(\alpha) +
e^{i\theta_2} \cos(\beta)$
\end{center}
  \caption{The general rule $(A)$.}
  \label{fig:general}
\end{figure}

Compared to the other rules, it is important to note that this one
cannot be applied everywhen, but only if the angles satisfy the
technical condition stated on the rule.

\subsection{Results}

We are now ready to state the main results of the paper.

The results are twofold. First, we prove that the ugly rule of
Figure~\ref{fig:cyc} can be represented by the well known
generalized supplementarity rule.
This rule was introduced in \cite{ZXCycl} and is the content of
Figure~\ref{fig:sup}.

\begin{figure}[h]
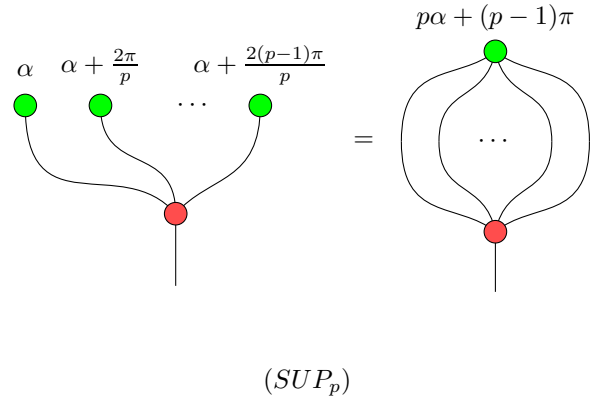

  \input figs/def-sup
  \caption{The generalized supplementarity rules $(SUP_p)$.}
  \label{fig:sup}
\end{figure}  

\noindent The result may be stated as follows: For any prime number $p$, ${ZX
\cup \{SUP_p\} \vdash CYC_p}$. It is interesting to note that the proof
uses complicated angles: To prove $CYC_{11}$ for instance, our proof
uses diagrams involving the angle $\pi/32$.

Our second series of results concern the other ugly rule of
Figure~\ref{fig:general}. This rule is the only rule that is nonlinear
and involves a precondition to be tested. This rule is known to be
necessary, but we are interested here in what can be proved without
this rule.

Here are our results:
\begin{itemize}
  \item An equality $D_1 = D_2$ is provable without rule $(A)$
    iff it is a specialization of a more general diagram 
    equality $D'_1(\alpha_1, \alpha_2 \dots, \alpha_n) =D'_2(\alpha_1,
    \alpha_2, \dots \alpha_n)$ with rational angles, that is:
    \begin{itemize}
      \item For all values of $\alpha_i$, $ZX \vdash D'_1(\alpha_1, \alpha_2
        \dots, \alpha_n) =D'_2(\alpha_1,     \alpha_2, \dots
        \alpha_n)$
      \item $D'_1$ and $D'_2$ use, apart from $\alpha_1 \dots
        \alpha_n$,  only rational angles.
      \item For some value of $\alpha_1 \dots \alpha_n$,
        $D_1$ and $D_2$ are equal respectively to $D'_1(\alpha_1,
        \alpha_2 \dots \alpha_n)$ and $D'_2(\alpha_1,
        \alpha_2 \dots \alpha_n)$.
    \end{itemize}

  \item
    An equality $D_1 = D_2$ is provable without rule $(A)$
    iff it is a ``limit'' of diagram equalities involving only rational
    angles: There exists two sequences of diagrams $D_1^n$ and $D_2^n$
    with rational angles s.t.
    
    \begin{itemize}
    \item For all $n$, $ZX \vdash D_1^n  = D_2^n$
    \item $D_1^n \xrightarrow{n \to\infty} D_1$
    \item $D_2^n \xrightarrow{n \to\infty} D_2$
    \end{itemize}      

    Convergence here is to be understood as point-wise convergence of
    angles: Diagrams $D_1^n$ and $D_2^n$ are structurally identical to
    $D_1$ and $D_2$, but with different angles, that  converge to the
    real angles when $n$ goes to infinity.
  \item As a consequence of the previous result, we give a complete
    characterization of all equalities related to the Euler
    decomposition that are provable without rule $(A)$. This list
    is not surprising and contains all usual suspects.
\end{itemize}

\section{Cyclotomic Supplementarity as a sufficient rule}

In this section, we will prove that the ZX-calculus as presented in
Figure~\ref{fig:dyadic} together with the supplementarity rules $(SUP_p)$ of
Figure~\ref{fig:sup} is a complete axiomatisation of the rational
fragment of the ZX-calculus. 

The article \cite{NormalForm} introduced a meta-rule to make the ZX-calculus
complete for the rational fragment. However, it is also clear from the
article that the set of rules $CYC_p$ gives also a complete
axiomatisation:
\begin{theorem}[\cite{NormalForm}]
  The axioms of ZX, as presented in Figure~\ref{fig:dyadic}, together with
  the axioms $(CYC_p)$ for prime numbers $p$ as presented in Figure~\ref{fig:sup} is a complete axiomatisation of the rational fragment of the ZX-calculus.
\end{theorem}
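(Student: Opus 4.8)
The plan is to extract this statement from the completeness proof of \cite{NormalForm}, isolating the single place where that proof appeals to a meta-rule and checking that the appeal is covered by the family $(CYC_p)$ for prime $p$. Since the rules of Figure~\ref{fig:dyadic} are already complete for everything \emph{linear} (this is essentially Theorem~\ref{thm:gendyadic}), the whole game is to understand the non-linear, genuinely arithmetic content of the rational fragment.

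First I would recall the architecture of the normal form of \cite{NormalForm}: every ZX-diagram with rational angles is rewritten, using only the rules of Figure~\ref{fig:dyadic} together with the triangle generator and its defining equations, into a canonical diagram that transparently encodes, entry by entry, the matrix it represents as a tuple of cyclotomic integers in $\mathbb{Z}[\zeta]$ (up to a controlled global scalar). The triangle is not a genuinely new primitive: it is definable through the right-hand side of the $(BW)$ rule, so every equation about triangles used in the normal-form construction must ultimately be reduced to ZX-equations. Two diagrams with the same interpretation produce normal forms encoding the same data, and the heart of completeness is that two such normal forms are then provably equal.

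Second, I would trace where that argument needs strictly more than Figure~\ref{fig:dyadic}. The only genuinely non-dyadic step is the collapse, to the zero encoding, of a diagram encoding a vanishing sum of roots of unity with non-negative integer coefficients; equivalently, the derivation of diagrammatic instances of the identity $1 + \zeta_p + \zeta_p^2 + \cdots + \zeta_p^{p-1} = 0$ for a primitive $p$-th root of unity $\zeta_p$ --- which is exactly the content of $CYC_p$. I would then close the remaining gap with the classical structure theorem for vanishing sums of roots of unity (de Bruijn; Lam--Leung): any vanishing sum of $n$-th roots of unity with non-negative integer coefficients is a non-negative integer combination of rotated copies of the elementary relations $1 + \zeta_p + \zeta_p^2 + \cdots + \zeta_p^{p-1} = 0$ for primes $p \mid n$. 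Diagrammatically, the rules of Figure~\ref{fig:dyadic} let us rotate any instance of $CYC_p$ by an arbitrary rational phase and add such instances together, so each cyclotomic identity invoked in the normal-form argument becomes derivable in $ZX \cup \{CYC_p : p \text{ prime}\}$. Substituting these derivations for the meta-rule calls in \cite{NormalForm} yields the stated completeness.

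The step I expect to be the main obstacle is the second one: the meta-rule of \cite{NormalForm} is phrased at a level of generality that does not visibly factor through the primitive cyclotomic relations, so one must look inside that proof, enumerate its uses, and in each case exhibit a reduction to $CYC_p$ --- keeping track of the scalar and global-phase side conditions and of the $(BW)$-based elimination of the triangle generator. Once that inspection is complete, the de Bruijn--Lam--Leung reduction makes the rest routine.
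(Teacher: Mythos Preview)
The paper does not give its own proof of this theorem: it is stated with a citation to \cite{NormalForm}, preceded only by the remark that ``it is also clear from the article that the set of rules $CYC_p$ gives also a complete axiomatisation''. So there is nothing substantive to compare your proposal against; the paper treats this as a black-box import.

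Your sketch is a sensible reconstruction of why the import is legitimate: follow the normal-form argument of \cite{NormalForm}, locate the single place where the meta-rule is invoked (reducing to zero a diagram encoding a vanishing $\mathbb{Z}$-combination of roots of unity), and replace that invocation by an explicit derivation from the primitive relations $1+\zeta_p+\cdots+\zeta_p^{p-1}=0$. Your appeal to the de~Bruijn/Lam--Leung structure theorem is a clean way to close this (noting, as you implicitly do, that negative coefficients are harmless since $-1$ is itself a root of unity). Whether \cite{NormalForm} actually routes through that theorem or through a more direct inductive argument on the denominators is something only inspection of that paper can settle, but your plan is sound and your identification of the main obstacle---checking that every use of the meta-rule really does factor through the prime cyclotomic relations, with scalars and the $(BW)$-based triangle elimination tracked---is exactly right.
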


We will prove in this section that we can replace the cyclotomic
axioms $CYC_p$ by the supplementarity axioms:

\begin{theorem}
  Let $p$ be an odd prime number. Then
$ZX \cup \{SUP_p\} \vdash CYC_p$.
\end{theorem}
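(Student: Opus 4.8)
The plan is to route the derivation through the $\frac{\pi}{2^\bullet}$ fragment. Two observations drive it. First, the triangle generator occurring in $CYC_p$ is definable in $ZX$ (it is the right-hand side of the $(BW)$ rule), so both sides of $CYC_p$ can be rewritten, using $ZX$ alone, as triangle-free diagrams whose only non-dyadic angles are integer multiples of $\frac{2\pi}{p}$. Second, $SUP_p$ and $CYC_p$ encode the same algebraic fact: $SUP_p$ is the diagrammatic form of the identity $\prod_{k=0}^{p-1}\bigl(1+e^{i(\alpha+2k\pi/p)}\bigr)=1+e^{ip\alpha}$, whose expansion in $x=e^{i\alpha}$ says exactly that the elementary symmetric functions $e_1,\dots,e_{p-1}$ of the $p$-th roots of unity vanish, i.e. that $\zeta_p=e^{2i\pi/p}$ is a root of $1+X+\dots+X^{p-1}$; and this is precisely the relation that $CYC_p$ states in ``sum'' rather than ``product'' form. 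The whole task is therefore to convert one form into the other, using $ZX$ as the only glue.

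Concretely, I would first note that, $CYC_p$ being an axiom of a complete and hence sound calculus, its two sides have the same standard interpretation, so the problem is purely syntactic. Then I would pick, depending on $p$, a dyadic angle $\alpha_0=\frac{\pi}{2^n}$ with $n$ chosen from the multiplicative behaviour of $2$ modulo $p$ (for $p=11$ one may take $n=5$, using $2^5\equiv-1\pmod{11}$, so that the relevant angle is $\pi/32$). The point of the choice is that $p\alpha_0=\frac{p\pi}{2^n}$ lies in the $\frac{\pi}{2^\bullet}$ fragment, so instantiating $SUP_p$ at $\alpha=\alpha_0$ yields, through a $ZX$ derivation, an equality $E_p$ between a diagram whose arms carry the angles $\alpha_0+\frac{2k\pi}{p}$ and a diagram all of whose angles lie in the $\frac{\pi}{2^\bullet}$ fragment.

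Next I would massage $E_p$ with $ZX$ only. Since $\alpha_0$ is dyadic, each arm of the left-hand side can be split by spider and colour rules into a dyadic part carrying $\alpha_0$ and a part carrying the bare root of unity $\frac{2k\pi}{p}$; pushing the dyadic parts out and collecting them --- and performing the matching normalisation on the right --- rewrites $E_p$ into an identity of the shape $[\text{diagram in the }\tfrac{2\pi}{p}\text{-angles}]=[\text{dyadic diagram}]$ which is a rearrangement of $CYC_p$ modulo a residual equality whose diagrams all have angles in the $\frac{\pi}{2^\bullet}$ fragment. That residual equality is a true equality between $\frac{\pi}{2^\bullet}$-linear diagrams, hence provable in $ZX$ alone by Theorem~\ref{thm:gendyadic}; composing all the steps gives $ZX\cup\{SUP_p\}\vdash CYC_p$. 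Depending on the exact drawing of $CYC_p$, a second instantiation of $SUP_p$ (at $\alpha_0$ and at a shifted angle, then combined) may be needed to isolate the bare sum $1+\zeta_p+\dots+\zeta_p^{p-1}$.

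The main obstacle is this last massaging step: exhibiting the explicit $ZX$ derivation that turns the ``product'' form produced by $SUP_p$ into the ``sum'' form of $CYC_p$, and, inside it, pinning down the correct dyadic angle $\frac{\pi}{2^n}$ for each odd prime $p$. This is where the argument ceases to be formal --- one computes with concrete small-angle diagrams --- and the non-uniform choice of $n$ (the $\pi/32$ for $p=11$) reflects a real number-theoretic input about powers of $2$ modulo $p$, not a one-line reduction. Matching all the scalar factors is a further delicate but routine bookkeeping task, which I would settle at the end using the scalar completeness already available in $ZX$.
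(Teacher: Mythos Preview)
Your high-level reading is correct: $SUP_p$ is the product identity $\prod_k(1+x\zeta_p^k)=1+x^p$, and $CYC_p$ is the vanishing of its degree-$1$ coefficient $\sum_k\zeta_p^k$. But the mechanism you propose for passing from product to sum does not work. Instantiating $SUP_p$ at one (or two) dyadic values of~$\alpha$ yields one (or two) scalar identities in $\zeta_p$; that is not enough information to isolate a single coefficient of a degree-$p$ polynomial, and no amount of ``pushing dyadic parts out'' through the red spider at the base of the $SUP_p$ diagram will manufacture the missing information---there is no $ZX$ rule that lets a common green phase commute across that red node. Your number-theoretic choice of $n$ is also a red herring: $p\pi/2^n$ is dyadic for \emph{every} $n$, so the condition $2^n\equiv-1\pmod p$ constrains nothing (and is not even satisfiable for $p=7$, where the order of $2$ is $3$).

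The idea you are missing is a diagrammatic Fourier extraction. The paper keeps $\alpha$ free and first rewrites $SUP_p$ into a parametric equality whose interpretation reads $\bigl(\begin{smallmatrix}1\\Q(e^{i\alpha},\zeta_p)\end{smallmatrix}\bigr)=\bigl(\begin{smallmatrix}1\\0\end{smallmatrix}\bigr)$, with $Q$ of degree $2p$ in $e^{i\alpha}$ and degree-$1$ coefficient exactly $-\sum_k\zeta_p^k$. It then builds a W-state-like gadget $W_n$ on $2^n$ legs whose effect, when fed the $2^n$ instances of this equality at $\alpha=2k\pi/2^n$, is to return the order-$1$ discrete Fourier coefficient of $Q$, hence to single out $\sum_k\zeta_p^k$---provided $2^n>2p$. \emph{That} inequality is the actual constraint on $n$, and it is why $p=11$ forces $n=5$ and hence the angle $\pi/32$; the coincidence with your congruence is accidental. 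All the auxiliary equalities in this construction hold for every value of the free angle replacing $2\pi/p$, so Theorem~\ref{thm:gendyadic} discharges them; the only genuinely non-dyadic input is $SUP_p$, used parametrically in $\alpha$ and then sampled at $2^n$ dyadic points, not at one.
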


We now proceed to the proof of the theorem. In all the following, $p$ is fixed to some odd prime number.

We will first explain informally the structure of the proof.
If terms of interpretation, the supplementarity rule $SUP_p$ is stating that for all $\alpha$, 
\[
\begin{pmatrix}
  P_1(e^{i\alpha},e^{i\beta})\\
  P_2(e^{i\alpha},e^{i\beta}) 
\end{pmatrix}   =
\begin{pmatrix}
  P_3(e^{i\alpha},e^{i\beta})\\
  P_4(e^{i\alpha},e^{i\beta}) 
\end{pmatrix} 
\]
where $P,Q,P',Q'$ are some polynomials, and $\beta = 2\pi/p$.

The cyclotomic rule $CYC_p$ is stating that
\[
\begin{pmatrix}
1\\
R(e^{i\beta}) 
\end{pmatrix}   =
\begin{pmatrix}
  1\\
  0\\
\end{pmatrix} 
\]
for some polynomial $R$ and $\beta = 2\pi/p$.

To obtain $CYC_p$ from $SUP_p$, we will proceed in two steps:
\begin{itemize}
  \item First, rewrite $SUP_p$ in such a way that the new diagram
    equality can be interpreted, as saying, in terms of interpretation
    that:
\[
\begin{pmatrix}
1\\
S(e^{i\alpha},e^{i\beta}) 
\end{pmatrix}   =
\begin{pmatrix}
  1\\
  0\\
\end{pmatrix} 
\]
when $\beta = 2\pi/p$.
   \item It turns out that the polynomial $R$ involved in the
     cyclotomic rule $CYC_p$ is actually one of the terms of the
     polynomial $S$. In the next step, we will explain how this term
     can be extracted diagrammatically, which will prove the result.    
\end{itemize}

Now, for the first step. Using elementary manipulation, we get to rewrite the supplementarity in the form
\begin{lemma}
  The following diagram equation is provable in $ZX \cup \{SUP_p\}$:

  \input figs/fig-sup
\end{lemma}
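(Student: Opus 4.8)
The plan is to prove the lemma by a purely diagrammatic rewriting of the supplementarity rule $SUP_p$, massaging it into a form whose interpretation has a leading coordinate equal to $1$, so that the equation reads as an identity between two column vectors $(1, S(e^{i\alpha}, e^{i\beta}))^T$ and $(1,0)^T$. Concretely, I would start from the diagram on both sides of $SUP_p$ (a pair of spiders fused in the standard supplementarity configuration with phases $\alpha + 2k\pi/p$), and pre/post-compose both sides with fixed dyadic-angle diagrams — green/red nodes with angles in $\frac{\pi}{2^\bullet}\mathbb{Z}$, Hadamard boxes, and the cup/cap/identity structure — chosen so that (i) the resulting diagram is still a $0 \to 1$ (or $1 \to 1$ restricted to a $0 \to 1$ vector) diagram, and (ii) the scalar contribution normalises the $|0\rangle$-component to exactly $1$. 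Since every such pre/post-composition is an equality provable in plain $ZX$ (it is just ``doing the same thing to both sides''), the rewritten equation is provable in $ZX \cup \{SUP_p\}$, which is what the lemma asserts.

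The key steps, in order: first, recall/expand the interpretation of the $SUP_p$ diagram to see exactly which polynomials $P_1,\dots,P_4$ in $e^{i\alpha}, e^{i\beta}$ appear, using the matrix formula for $R_Z^{(n,m)}(\alpha)$ and the fact that fusing $p$ green spiders with phases $\alpha + 2k\pi/p$ produces, via the spider rule $(S1)$, a single green spider whose phase-related coefficient telescopes nicely (this is the standard observation behind cyclotomic supplementarity). Second, identify the scalar factor one needs to divide out: this is a product of dyadic scalars (things like $\sqrt2$, $1 + e^{i\pi/2^n}$, etc.) that can be realised by small fixed ZX-diagrams — this is exactly where the paper will later say ``using elementary manipulation'', and I would cite the scalar lemmas from \cite{scalar-completeness} or the relevant manipulations in \cite{Cliffordcomplete}. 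Third, attach those scalar diagrams and a handful of connectivity rewrites ($(S1)$, $(S2)$, $(B1)$, $(B2)$, $(H)$, the Euler rule $(EU)$ for turning $H$ into green–red–green) to both sides, pushing through the bookkeeping until the left-hand side becomes the claimed diagram with $|0\rangle$-component $1$ and the right-hand side becomes the diagram with interpretation $(1,0)^T$ (a single green node of phase $0$ with an appropriate leg, i.e. essentially a $|{+}\rangle$ or $|0\rangle$ state up to the normalising scalar). Fourth, conclude that the displayed figure equation \input{figs/fig-sup} holds in $ZX \cup \{SUP_p\}$ because each rewrite step is sound in $ZX$ and the one non-trivial input is $SUP_p$ itself.

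The main obstacle I expect is the scalar bookkeeping: supplementarity-type rules are notoriously delicate about global scalars, and getting the $|0\rangle$-coordinate to be \emph{exactly} $1$ (rather than merely nonzero) requires choosing the auxiliary dyadic diagrams with some care, and verifying that the chosen scalar is indeed dyadic (so that it is constructible with the rules of Figure~\ref{fig:dyadic} alone, without invoking any cyclotomic or $(A)$ axiom). A secondary subtlety is making sure the rewriting of the $p$-fold fused spider is done by legitimate applications of the spider/fusion rules in the form stated in Figure~\ref{fig:dyadic} — in particular that no implicit use of a cyclotomic identity sneaks in at this stage, since the whole point is that $SUP_p$ is the \emph{only} extra axiom used. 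Once the normalisation scalar is pinned down, the remaining manipulations are routine spider-calculus rewrites and I would present them compactly in the figure \texttt{figs/fig-sup} rather than spelling out every intermediate diagram.
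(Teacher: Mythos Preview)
Your overall strategy---start from $SUP_p$ and pre/post-compose both sides by fixed pieces of diagram, each step being a valid $ZX$ rewrite---is exactly what the paper means by ``elementary manipulation'', and in that sense your approach matches the paper's.

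However, you have misidentified the target of this particular lemma. The equation \texttt{figs/fig-sup} does \emph{not} have leading coordinate $1$: its interpretation is
\[
\begin{pmatrix} P(e^{i\alpha}) \\ Q(e^{i\alpha}) \end{pmatrix}
=
\begin{pmatrix} 2e^{ip\alpha} \\ 0 \end{pmatrix},
\]
as the paper spells out immediately after the lemma. The passage to a vector with leading coordinate $1$ is carried out \emph{afterwards}, by introducing the auxiliary diagrams $D_1$ and $D_2$ (whose interpretations are $(1,P)^T$ and $(1,Q)^T$) and proving the propositions that follow. In other words, you have collapsed the lemma with the next block of the argument.

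The practical consequence is that almost all of the difficulty you anticipate---the delicate scalar bookkeeping to force the $|0\rangle$-component to be exactly $1$, the worry about whether the normalising scalar is dyadic---simply does not arise at this stage. The lemma is the easy part: a few spider fusions, a colour change or two, and a rearrangement of legs, with no normalisation required. The genuine work (extracting the $(1,Q)$ form and then picking out the degree-$1$ coefficient) is deferred to the subsequent propositions, where it is handled not by ad hoc scalar constructions but by appealing to Theorem~\ref{thm:gendyadic}, since the relevant identities hold for all values of $\alpha,\beta$ and involve only dyadic constants. So your plan is not wrong, but it front-loads concerns that belong elsewhere and would make the proof of this lemma look harder than it is.
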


In terms of matrices, this can be interpreted as the following
equality:

\[
  \frac{1}{2}
\begin{pmatrix}
(1+e^{ip\alpha}) \prod_k (e^{i\alpha+2k\pi/p}+1) + (1-e^{ip\alpha})
\prod_k (e^{i\alpha+2k\pi/p}-1)\\
-(1-e^{ip\alpha}) \prod_k (e^{i\alpha+2k\pi/p}+1) - (1+e^{ip\alpha}) \prod_k (e^{i\alpha+2k\pi/p}-1)    
\end{pmatrix}
=
\begin{pmatrix}
2e^{pi\alpha}\\
  0
\end{pmatrix}
\]

or more simply

\[
  \begin{pmatrix}
    P(e^{i\alpha})\\
    Q(e^{i\alpha})\\    
  \end{pmatrix}
  = 
\begin{pmatrix}
 2e^{pi\alpha}\\
  0
\end{pmatrix}
\]
for some polynomials $P$ and $Q$. The next step is to split this
diagram into two diagrams, one that speaks of $P$, and one that speaks
of $Q$.

For this we introduce three diagrams. $D$ is the generalization of the
supplementarity diagrams, and $D_1$, $D_2$ somehow corresponds to the
two polynomials.

\begin{definition}
We define three (families of) diagrams as follows:
  
\begin{center}
  \input figs/fig-sup2
\end{center}

\begin{center}
  \input figs/fig-sup3
\end{center}

\begin{center}
  \input figs/fig-sup4
\end{center}
\end{definition}

The three diagrams have the following interpretations:
\[
\interp{D(\alpha,\beta)} =  \frac{1}{2}
\begin{pmatrix}
(1+e^{ip\alpha}) \prod_k (e^{i\alpha+k\beta}+1) + (1-e^{ip\alpha})
\prod_k (e^{i\alpha+k\beta}-1)\\
-(1-e^{ip\alpha}) \prod_k (e^{i\alpha+k\beta}+1) - (1+e^{ip\alpha})
\prod_k (e^{i\alpha+k\beta}-1)
\end{pmatrix} =
  \begin{pmatrix}
    P(e^{i\alpha},e^{i\beta})\\
    Q(e^{i\alpha},e^{i\beta})
  \end{pmatrix}    
\]

\[
\interp{D_1(\alpha,\beta)} = 
\begin{pmatrix}
1\\
\frac{1}{2}\left[
(1+e^{ip\alpha}) \prod_k (e^{i\alpha+k\beta}+1) + (1-e^{ip\alpha})
\prod_k (e^{i\alpha+k\beta}-1)\right]\\
\end{pmatrix}
=
\begin{pmatrix}
  1 \\
  P(e^{i\alpha},e^{i\beta})\\
\end{pmatrix}  
\]

\[
\interp{D_2(\alpha,\beta)} = 
\begin{pmatrix}
1\\
\frac{-1}{2}\left[
(1-e^{ip\alpha}) \prod_k (e^{i\alpha+k\beta}+1) + (1+e^{ip\alpha})
\prod_k (e^{i\alpha+k\beta}-1)\right]\\
\end{pmatrix}
=
\begin{pmatrix}
  1 \\
  Q(e^{i\alpha},e^{i\beta})\\
\end{pmatrix}  
\]

We therefore have the following:

\begin{proposition}
  The following equations are provable in $ZX \cup \{SUP_p\}$:
  \begin{enumerate}
\renewcommand\labelenumi{(\roman{enumi})}
\renewcommand\theenumi{$\labelenumi$}
\item   \label{enum:i}
  \input figs/fig-sup5
\item\label{enum:ii}
  \input figs/fig-sup6
\item\label{enum:iii}
  \input figs/fig-sup7
\item\label{enum:iv}
  \input figs/fig-sup7b
\item\label{enum:v}
  \input figs/fig-sup8
\end{enumerate}  
\end{proposition}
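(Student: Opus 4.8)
All five equations are identities between ZX-diagrams whose angles are integer linear combinations of the free parameters $\alpha,\beta$ together with constants in $\frac{\pi}{2^\bullet}\mathbb{Z}$, so I would prove them from just two ingredients. The first is Theorem~\ref{thm:gendyadic}: once the two standard interpretations of such an equation are checked to agree as functions of the parameters, the equation is provable in plain $ZX$, hence in $ZX\cup\{SUP_p\}$. The second is the preceding Lemma, i.e.\ the rewriting of $SUP_p$; this is the only place $SUP_p$ is consumed, and at the level of interpretations it is exactly the pair of scalar identities $P(e^{i\alpha},e^{2\pi i/p})=2e^{ip\alpha}$ and $Q(e^{i\alpha},e^{2\pi i/p})=0$, valid for all $\alpha$.

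First I would dispatch the equations that hold for all $\alpha$ and $\beta$ — the structural relations tying $D(\alpha,\beta)$ to $D_1(\alpha,\beta)$ and $D_2(\alpha,\beta)$. For these one only reads off the interpretations $\binom{P}{Q}$, $\binom{1}{P}$, $\binom{1}{Q}$ recorded just before the Proposition, checks by a routine computation (tensor products together with a fixed cup/cap-and-Hadamard gadget) that the two sides produce the same matrix, and applies Theorem~\ref{thm:gendyadic}.

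Next come the equations in which $\beta$ is frozen to $2\pi/p$. Here I would first apply the structural rewriting above so that $P$ sits inside a $D_1$-shaped diagram and $Q$ inside a $D_2$-shaped one, then feed in the preceding Lemma: it collapses the $Q$-part to the zero (``dead'') diagram up to a scalar, and the $P$-part to a single green node of phase $p\alpha$. Evaluating the products uses only the elementary identity $\prod_{k=0}^{p-1}\bigl(x-e^{2k\pi i/p}y\bigr)=x^{p}-y^{p}$ (with a sign bookkeeping that relies on $p$ being odd), which gives $\prod_k\bigl(e^{i(\alpha+2k\pi/p)}\pm 1\bigr)=e^{ip\alpha}\pm 1$. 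The corresponding items of the Proposition are then just the diagrammatic records of these collapses.

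The step I expect to be the real obstacle — and which, if it is not already packaged into item~(v), is what these items are preparing for — is the extraction of a diagram computing $\binom{1}{R(e^{i\beta})}$, with $R(x)=1+x+\dots+x^{p-1}$ the $p$-th cyclotomic polynomial, out of the diagram $D_1$ computing $\binom{1}{P(e^{i\alpha},e^{i\beta})}$. This is legitimate because $R(e^{i\beta})$ is precisely the coefficient of $e^{i\alpha}$ in $P$: expanding $\prod_k(e^{i(\alpha+k\beta)}\pm1)$ as a sum over subsets $S\subseteq\{0,\dots,p-1\}$, the part of $e^{i\alpha}$-degree exactly one has coefficient $\sum_{k=0}^{p-1}e^{ik\beta}=R(e^{i\beta})$, while every other monomial of $P$ has $e^{i\alpha}$-degree $0$, at least $2$, or — because of the $e^{ip\alpha}$ prefactors — at least $p$. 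Diagrammatically this coefficient extraction is a Fourier-averaging construction: multiply by a phase $e^{-i\alpha}$ and average the diagram over $\alpha=2\pi j/2^{n}$ by plugging it into a $Z$-spider of arity $2^{n}$ with $2^{n}$ larger than $\deg_{e^{i\alpha}}P$, then simplify with the dyadic rules — which is exactly why the proof for a fixed $p$ must introduce angles as fine as $\pi/2^{n}$. One then re-verifies, again through Theorem~\ref{thm:gendyadic}, that the averaging diagram really implements coefficient extraction, specialises $\beta=2\pi/p$, and recognises the result as the left-hand side of $CYC_p$. Checking the interpretation of the averaging diagram and matching it syntactically to $CYC_p$ is the bookkeeping-heavy heart of the argument; everything before it is, by contrast, a mechanical interpretation check followed by an appeal to Theorem~\ref{thm:gendyadic} or to the Lemma.
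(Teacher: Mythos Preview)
Your approach to items (i)--(iv) is exactly the paper's: verify the interpretations agree for all $\alpha,\beta$ and invoke Theorem~\ref{thm:gendyadic}. For item~(v) the paper is even more direct than you suggest: (v) \emph{is} the preceding Lemma, verbatim --- the diagram $D(\alpha,2\pi/p)$ is by construction the left-hand side of the rewritten $SUP_p$, so there is nothing further to do.

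Everything from ``The step I expect to be the real obstacle'' onward is not part of this Proposition at all; it is the content of the subsequent Propositions (the $W_n$ averaging construction and the extraction of $CYC_p$). Your sketch of that later argument is broadly right, but note one slip: the coefficient of $e^{i\alpha}$ that equals $-R(e^{i\beta})$ lives in $Q$, not in $P$, and accordingly it is $D_2$ (not $D_1$) that gets fed into the Fourier-averaging gadget. This does not affect the present Proposition, whose proof you have essentially reproduced.
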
  
\begin{proof}
  The first four are true for all $\alpha,\beta$ and are therefore
  provable by Theorem~\ref{thm:gendyadic}.

  The last one is the supplementary equality in the form provided by
  the previous lemma.
\end{proof}

\begin{proposition}
  \label{prop:first}
  The following equation is provable in $ZX \cup \{SUP_p\}$:
\begin{center}
  \input figs/fig-sup9
\end{center}  
\end{proposition}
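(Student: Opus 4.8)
The plan is to obtain the claimed equation purely by assembling the equations (i)--(v) of the previous proposition. Recall that (i)--(iv) are valid for all $\alpha,\beta$ and are therefore available for free via Theorem~\ref{thm:gendyadic}, whereas (v) is the supplementarity equality in the rewritten form of the Lemma, valid only for $\beta = 2\pi/p$. In matrix terms, (i)--(iv) record how the ``combined'' state diagram $D(\alpha,\beta)$, with $\interp{D(\alpha,\beta)} = \begin{pmatrix}P(e^{i\alpha},e^{i\beta})\\ Q(e^{i\alpha},e^{i\beta})\end{pmatrix}$, can be re-expressed through the ``one-component'' diagrams $D_1(\alpha,\beta)$ and $D_2(\alpha,\beta)$, whose interpretations are $\begin{pmatrix}1\\ P\end{pmatrix}$ and $\begin{pmatrix}1\\ Q\end{pmatrix}$, glued together by suitable green spiders and Hadamard boxes; equation (v) then lets us feed the known value of $D(\alpha,2\pi/p)$ into that decomposition.

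I would carry this out in three steps. First, rewrite the left-hand side of the target equation with the structural identities (i)--(iv) until the sub-diagram $D(\alpha,2\pi/p)$ (or a Hadamard/$X$-conjugate of it) appears explicitly; on the interpretation side this is exactly the algebraic bookkeeping behind $P - Q = \prod_k(e^{i\alpha+k\beta}+1) + \prod_k(e^{i\alpha+k\beta}-1)$ and $P + Q = e^{ip\alpha}\left(\prod_k(e^{i\alpha+k\beta}+1) - \prod_k(e^{i\alpha+k\beta}-1)\right)$, which is what lets one pass between $D$ and the pair $(D_1,D_2)$. Second, apply (v) to replace that occurrence of $D(\alpha,2\pi/p)$ by the right-hand side of (v), whose interpretation is $\begin{pmatrix}2e^{ip\alpha}\\ 0\end{pmatrix}$. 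Third, simplify with the spider and bialgebra rules of Figure~\ref{fig:dyadic}: since $\begin{pmatrix}2e^{ip\alpha}\\ 0\end{pmatrix}$ is proportional to the $|0\rangle$ state, merging it into the green spiders collapses one of the two components, and the diagram that survives is precisely the right-hand side of the target equation.

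The main obstacle here is organisational rather than conceptual: one must pick, among (i)--(iv), the exact order and the exact green/Hadamard gadget that makes the chain close up, and keep track of the scalars ($\tfrac12$, $2$, and the phase $e^{ip\alpha}$) so that they cancel correctly. Each individual rewrite is then either an instance of Theorem~\ref{thm:gendyadic} or the substitution of (v), and correctness of every intermediate step can be checked directly on the interpretation side, since every diagram involved is a two-dimensional state. One last subtlety: (v) holds only at $\beta = 2\pi/p$, so the specialisation $\beta \mapsto 2\pi/p$ must be carried out before (v) is invoked, which is harmless since the structural identities (i)--(iv) hold for all $\beta$.
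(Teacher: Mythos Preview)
Your high-level plan is the paper's plan: feed the supplementarity equality (v) into the structural decomposition (i)--(iv) so as to isolate $D_2(\alpha,2\pi/p)$. Where your outline departs from what actually has to happen is the claim that a \emph{single} use of (v), followed by ordinary ZX simplification, finishes the job.

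After you combine (i) with (v) you obtain an equation that still involves \emph{both} $D_1(\alpha,2\pi/p)$ and $D_2(\alpha,2\pi/p)$: on the interpretation side this is one relation between the two unknowns $P$ and $Q$, and the spider/bialgebra rules of Figure~\ref{fig:dyadic} by themselves cannot eliminate $D_1$ from it. In the paper's argument, one first applies (ii) to this intermediate equation, then uses (iv) to rewrite the $D_1$-side so that a fresh copy of $D(\alpha,2\pi/p)$ reappears, and at that point (v) is applied a \emph{second} time; only then does (iii) collapse the remaining diagram to the right-hand side. So your Step~3 (``simplify with the spider and bialgebra rules'') is too optimistic: you must loop back through (iv) and invoke $SUP_p$ once more before anything collapses. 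Apart from this, your description of the bookkeeping (scalars, the specialisation $\beta\mapsto 2\pi/p$ before using (v), checking each step on interpretations) is accurate.
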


\begin{proof}
By putting \ref{enum:i} and \ref{enum:v} together we get:
\begin{center}
  \input figs/fig-sup10
\end{center}  

Therefore:
\begin{center}
  \input figs/fig-sup11
\end{center}  

Applying \ref{enum:ii} on the left and simplifying the right term:
\begin{center}
  \input figs/fig-sup12
\end{center}  

Now we apply \ref{enum:iv} then \ref{enum:v} on the left and
\ref{enum:iii} on the right:

\begin{center}
  \input figs/fig-sup13
\end{center}  
which gives the result.
\end{proof}

Now the advantage of the diagram $D_2$ over the original
supplementarity equation is that its interpretation is
\[\begin{pmatrix}
  1 \\
  Q(e^{i\alpha},e^{i\frac{2\pi}{p}})\\
\end{pmatrix}  
\]
That is, at least one term of the matrix is fixed to a value that
doesn't depend on $e^{i\alpha}$ (and is nonzero).

Now, the equality of the previous proposition means that $Q(e^{i\alpha},
e^{i\frac{2\pi}{p}})$ is identically $0$, which means that each coefficient of the
polynomial (in $e^{i\alpha}$) is equal to $0$.

It is easy to see that the first coefficient of $Q(e^{i\alpha},e^{i\beta})$
is $0$ (independently of $\beta$).

Notice that the term of degree $1$ (in $e^{i\alpha}$) of the polynomial
$Q(e^{i\alpha}, e^{i\beta})$ is exactly
$-1-e^{i\beta}-e^{i2\beta}-\dots -e^{i(p-1) \beta}$.

If we find a way to retrieve the term of degree $1$ of a polynomial
with a diagrammatic construction, we are close to our result.

This is done with the following result:

\begin{proposition}
  Let $P(X) = \sum_{r \leq d} a_r X^r$ be a polynomial of degree $d$.
  Choose $n$ such that $2^n > d $.

  Then

  \[ \sum_{k = 0}^{2^n-1} \frac{P(e^{2ki\pi/2^n})}{e^{2ki\pi/2^n}} =
    2^n a_1 \]

\end{proposition}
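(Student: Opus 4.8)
The plan is to recognise the left-hand side as a root-of-unity filter, i.e.\ a discrete Fourier transform applied to the coefficient sequence of $P$. Write $N = 2^n$ and let $\omega = e^{2i\pi/N}$, a primitive $N$-th root of unity, so that the evaluation points $e^{2ki\pi/2^n}$ are exactly the powers $\omega^k$ for $0 \le k \le N-1$. Each $\omega^k$ is nonzero, so the quotients $P(\omega^k)/\omega^k$ are well defined; substituting $P(X) = \sum_{r=0}^d a_r X^r$ and interchanging the two finite sums gives
\[
\sum_{k=0}^{N-1} \frac{P(\omega^k)}{\omega^k}
= \sum_{k=0}^{N-1} \sum_{r=0}^{d} a_r\, \omega^{k(r-1)}
= \sum_{r=0}^{d} a_r \sum_{k=0}^{N-1} \omega^{k(r-1)}.
\]

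The next step is the standard geometric-sum evaluation of the inner sum: for any integer $m$, $\sum_{k=0}^{N-1} \omega^{km}$ equals $N$ when $\omega^m = 1$ and equals $0$ otherwise, since in the latter case $\sum_{k=0}^{N-1}(\omega^m)^k = (\omega^{mN}-1)/(\omega^m-1) = 0$ because $\omega^{mN} = 1$. As $\omega$ is primitive, $\omega^m = 1$ holds exactly when $N \mid m$. Applying this with $m = r-1$, the inner sum is $N$ precisely when $N \mid r-1$, and vanishes otherwise.

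Finally I would use the hypothesis $2^n > d$ to isolate the surviving index: for $0 \le r \le d$ we have $-1 \le r-1 \le d-1 < N$, and the only multiple of $N$ in that range is $0$, i.e.\ $r = 1$. Hence every term of the outer sum vanishes except the one for $r = 1$, which contributes $a_1 \cdot N = 2^n a_1$, giving the claimed identity.

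There is no genuine obstacle here; the argument is entirely routine. The one point worth flagging is the exact role of the degree bound $2^n > d$: it is precisely what guarantees that $r = 1$ is the \emph{unique} index with $N \mid r-1$, since without it the coefficients $a_{1+2^n}, a_{1+2\cdot 2^n}, \dots$ would also enter the sum. This is also why, when this proposition is later applied diagrammatically to extract the degree-$1$ coefficient of the polynomial $Q$ of Proposition~\ref{prop:first}, one is forced to introduce dyadic angles $2\pi/2^n$ with $2^n$ exceeding the degree of $Q$.
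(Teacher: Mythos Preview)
Your proof is correct and is exactly the approach taken in the paper: both expand $P$, interchange the finite sums, and invoke the orthogonality relation $\sum_k e^{2irk\pi/2^n} = 0$ for $r \not\equiv 0 \pmod{2^n}$ and $=2^n$ otherwise, with the degree bound $2^n>d$ ensuring that $r=1$ is the only surviving index. The paper merely states this in one line, whereas you spell out the geometric-sum computation and the range argument explicitly.
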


\begin{proof}
  Indeed  $\sum_k e^{2irk\pi/2^n }  = 0$ if $r \not= 0 \mod 2^n$  and
  $\sum_k e^{2irk\pi/2^n }  = 2^n$ otherwise. 
\end{proof}

We will use this idea to extract diagrammatically the first coefficient of our
polynomial.

\begin{definition}
  We define recursively a family of diagrams $W_n$ as follows:

\begin{center}
  \input figs/fig-sup14
\end{center}

\begin{center}
  \input figs/fig-sup15
\end{center}  
\end{definition}

For reference, here are the interpretations of $W_1$ and $W_2$:

\[\begin{pmatrix}
    1 & 0 & 0 & 0 \\
    0 & 1/2 & 1/2 & 0 \\
\end{pmatrix}  \]

\[
\left(  \begin{array}{cccccccccccccccc}
  1&  0&  0&  0&  0&  0&  0&  0&  0&  0&  0&  0&  0&  0&  0&  0\\
  0&1/4&1/4&  0&1/4&  0&  0&  0&1/4&  0&  0&  0&  0&  0&  0&  0\\
        \end{array}
        \right)
\]

It is easy to see by induction that $W_n$ is the matrix whose first
row contains a $1$ in position $0$ and whose second row contains
$2^{-n}$ in position $2^i$ for all $i$.

The astute reader may remark that  $W_n$ is a slight variant of the
W-state presented e.g. in the ZW calculus \cite{zw}.

The previous proposition then gives us immediately the following:
\clearpage
\begin{proposition}
Let $n$ be big enough s.t. $2^{n} > 2p$.
  
  The following equality is provable in $ZX$:

\begin{center}
  \input figs/fig-sup16
\end{center}  
\end{proposition}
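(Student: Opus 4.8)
The plan is to prove the equality exactly as the paper announces: it is ``immediate'' from the coefficient-extraction proposition together with Theorem~\ref{thm:gendyadic}. In the displayed equality the only non-constant angle is $\beta$. On the left, the diagram is assembled from $W_n$, from the $2^n$ diagrams $D_2\big(\tfrac{2k\pi}{2^n},\beta\big)$ for $k=0,\dots,2^n-1$, and from green phase gates carrying the dyadic angles $-\tfrac{2k\pi}{2^n}$; on the right, the diagram uses only the phases $0,\beta,2\beta,\dots,(p-1)\beta$ (plus a global $\pi$). So both sides are linear in $\beta$ with constants in $\frac{\pi}{2^\bullet}\mathbb Z$, and by Theorem~\ref{thm:gendyadic} it is enough to check that the two interpretations coincide for every real value of $\beta$.

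For the left-hand side, composing $D_2\big(\tfrac{2k\pi}{2^n},\beta\big)$, whose interpretation is $\begin{pmatrix}1\\ Q(e^{2ki\pi/2^n},e^{i\beta})\end{pmatrix}$, with the green phase gate $-\tfrac{2k\pi}{2^n}$ yields the state $\begin{pmatrix}1\\ e^{-2ki\pi/2^n}Q(e^{2ki\pi/2^n},e^{i\beta})\end{pmatrix}$. Plugging the tensor product of these $2^n$ states into $W_n$ and using the description of $W_n$ obtained above --- first row the indicator of $0^{2^n}$, second row $2^{-n}$ times the sum of the indicators of the weight-one strings $2^i$ --- the amplitude at $0^{2^n}$ is $1$ and the amplitude at position $2^k$ is $e^{-2ki\pi/2^n}Q(e^{2ki\pi/2^n},e^{i\beta})$, so the left-hand side interprets to
\[
\begin{pmatrix} 1\\[3pt] \displaystyle\frac{1}{2^n}\sum_{k=0}^{2^n-1}\frac{Q\big(e^{2ki\pi/2^n},e^{i\beta}\big)}{e^{2ki\pi/2^n}}\end{pmatrix}.
\]
Applying the previous proposition to $P(X)=Q(X,e^{i\beta})$, which has degree at most $2p<2^n$ in $X$, the sum equals $2^n a_1$, where $a_1=-1-e^{i\beta}-e^{2i\beta}-\dots-e^{(p-1)i\beta}$ is the coefficient of $X$ in $Q(X,e^{i\beta})$; hence the left-hand side interprets to $\begin{pmatrix}1\\ -1-e^{i\beta}-\dots-e^{(p-1)i\beta}\end{pmatrix}$.

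Finally one checks, by directly unfolding its definition, that the right-hand diagram interprets to the same state $\begin{pmatrix}1\\ -1-e^{i\beta}-\dots-e^{(p-1)i\beta}\end{pmatrix}$; Theorem~\ref{thm:gendyadic} then gives the claimed provability in $ZX$. The only place that needs care is the first computation: getting the amplitudes of $W_n$ composed with $\bigotimes_k(\cdots)$ right, and confirming $\deg_X Q<2^n$ (guaranteed by $2^n>2p$) so that the coefficient-extraction proposition legitimately applies; beyond that, the argument is purely mechanical.
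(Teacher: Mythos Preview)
Your proof is correct and follows exactly the paper's approach: verify that both sides have the same interpretation for every value of $\beta$ (using the coefficient-extraction proposition and the degree bound $2p<2^n$), observe that all angles other than $\beta$ lie in $\frac{\pi}{2^\bullet}\mathbb Z$, and conclude provability via Theorem~\ref{thm:gendyadic}. You simply spell out in detail the computation of the left-hand interpretation that the paper leaves implicit.
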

(the brace means this part of the diagram should be repeated $p-1$ times).
\begin{proof}
  By the previous proposition, the two diagrams have the same
  interpretations (the interpretation of $D_2$ is a polynomial of
  degree $2p < 2^n$ in $e^{i\alpha}$).

  Now these two diagrams are (apart from $\beta$) diagrams where all
  angles are of the form $k\pi/2^n$ and the two  diagrams have the
  same interpretations for all possible values of $\beta$. Therefore
  the equality is provable by Theorem~\ref{thm:gendyadic}.
\end{proof}

\begin{proposition}
  The following is provable in $ZX \cup \{ SUP_p \}$.
\begin{center}
  \input figs/fig-sup17
\end{center}
That is, $ZX \cup \{ SUP_p \} \vdash CYC_p$.
\end{proposition}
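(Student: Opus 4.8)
The plan is to glue together the two previous propositions and then do a small amount of diagram cleaning. First I would take the previous proposition (the one that extracts the degree-$1$ coefficient of $\interp{D_2}$ through the W-state diagrams $W_n$), which is provable in $ZX$ for \emph{every} value of $\beta$, and specialize it to $\beta = 2\pi/p$, fixing once and for all an $n$ with $2^{n} > 2p$. This gives a $ZX$-provable equality one side of which is the diagram appearing in $(CYC_p)$ — the ``chain'' obtained by repeating a fixed gadget $p-1$ times, whose interpretation is $\binom{1}{c}$ with $c = -(1 + e^{i\beta} + \dots + e^{i(p-1)\beta})$ — and the other side of which is $W_n$ with its $2^{n}$ legs plugged by the $2^{n}$ diagrams $D_2\big(\tfrac{2k\pi}{2^{n}}, \tfrac{2\pi}{p}\big)$, each post-composed with a green phase $-\tfrac{2k\pi}{2^{n}}$.

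Second I would invoke Proposition~\ref{prop:first}: for each $k$ it gives $ZX \cup \{SUP_p\} \vdash D_2\big(\tfrac{2k\pi}{2^{n}}, \tfrac{2\pi}{p}\big) = G$, where $G$ denotes the diagram on the right of that proposition, of interpretation $\binom{1}{0}$. Since provable equalities are closed under plugging into an arbitrary context, I can replace every copy of $D_2$ by $G$ in the equality obtained in the first step; its right-hand side then becomes $W_n$ with each of its legs fed by a copy of $G$ carrying a green phase.

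It only remains to reduce this last diagram, inside $ZX$, to $G$ itself. A green phase acts on the state $\binom{1}{0}$ as $\mathrm{diag}(1, e^{i\theta})$ and hence leaves it unchanged, so all the phases can be discarded; and $W_n$ fed with $2^{n}$ copies of $\binom{1}{0}$ reads off the coordinate in position $0$ (which is $1$) and annihilates every coordinate in a position of the form $2^{i}$ (which are all $0$), so its output is again $\binom{1}{0}$. The two diagrams involved here live in the $\tfrac{\pi}{2^{\bullet}}$ fragment with fixed dyadic angles and have exactly the same interpretation — including scalars, since $D_2$ contributes the exact constant $1$ and $W_n$ the exact constant $1$ in position $0$ — so their equality is provable in $ZX$ by Theorem~\ref{thm:gendyadic}. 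Composing the three $ZX \cup \{SUP_p\}$-provable equalities from these steps yields the desired $ZX \cup \{SUP_p\} \vdash CYC_p$.

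I expect the only genuinely delicate point to be the bookkeeping in the first step: checking that the right-hand side of the previous proposition, specialized at $\beta = 2\pi/p$, is (up to $ZX$-provable rewriting) literally the diagram occurring in $(CYC_p)$. This is exactly the place where one uses that the degree-$1$ coefficient of $Q(e^{i\alpha}, e^{i\beta})$ is $-(1 + e^{i\beta} + \dots + e^{i(p-1)\beta})$ and that $1 + \zeta + \dots + \zeta^{p-1} = 0$ for $\zeta = e^{2\pi i/p}$; everything else is either a direct citation of Proposition~\ref{prop:first} or an application of Theorem~\ref{thm:gendyadic} to diagrams with dyadic angles and identical interpretations, so no axiom beyond $SUP_p$ is ever invoked.
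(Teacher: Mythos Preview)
Your proposal is correct and follows essentially the same route as the paper: specialize the $W_n$-extraction proposition to $\beta = 2\pi/p$, substitute each $D_2(\tfrac{2k\pi}{2^{n}},\tfrac{2\pi}{p})$ by the constant state $G$ via Proposition~\ref{prop:first}, and then collapse the resulting dyadic diagram to $G$ by completeness (the paper invokes $\pi/4$-completeness, you invoke Theorem~\ref{thm:gendyadic}, which amounts to the same thing here). The ``delicate point'' you flag is handled in the paper simply because the right-hand side of the extraction proposition is already, syntactically, the chain appearing in $(CYC_p)$, so no further rewriting is needed there.
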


\begin{proof}
  By the previous proposition and Proposition~\ref{prop:first}:

  \begin{center}
    \input figs/fig-sup18
  \end{center}    

Therefore:
  \begin{center}
    \input figs/fig-sup19
  \end{center}    
  And therefore:
  
  \begin{center}
    \input figs/fig-sup20
  \end{center}    
where the last equality comes from the fact the two right terms have
the same interpretations and are in the fragment $\pi/4$ and therefore
are provably equal by completeness.
\end{proof}

\section{Equalities provable without rule $(A)$}
In this section we characterize which diagram equalities   $D_1 = D_2$
can be proven without rule $(A)$.

This section will primarily deal with vectors of variables. 
The notation $\overline{a}$ in this section will denote the vector
$a_1\dots a_n$. The size of the vector is usually either irrelevant or
obvious from the context. On some occasions, we will use the notation
$(\overline{a}_i)_{i \in I}$ which is not a typo, but refers to a
tuple of vectors.

Greek letters will represent variables.
By  an abstract diagram $D_1(\overline{\alpha})$ we mean a  diagram of the ZX-calculus
where some of the angles inside the green and red nodes are replaced
by some of the variables $\alpha_i$.
We say the diagram is rational if the other angles are in $\mathbb{Q}\pi$.
If $\overline{x}$ is a vector of real numbers of the same size as
$\overline{\alpha}$, we denote by $D_1(\overline{x})$ the diagram
where the variable $\alpha_i$ is replaced by the real number $x_i$.

Many of the results of this section are based on the following remark.
Consider two abstract diagrams $D_1(\overline{\alpha})$ and
$D_2(\overline{\alpha})$.
Then the set
\[
S = \{ \overline{x} \in \mathbb{R}^l | \interp{D_1(\overline{x})} = \interp{D_2(\overline{x})}\}
\]
has a special structure that can be exploited.
This  special structure is best evidenced by not seeing $S$ as a
subset of (vectors of) reals, but as points of the unit circle, i.e. by
investigating:
\[
S' = \{ e^{i\overline{x}} | \interp{D_1(\overline{x})} = \interp{D_2(\overline{x})}\}
\]
We will then rely on classical results from algebraic group theory and
diophantine geometry.
The two main results  we will
use is the fact that any linear commutative compact groups is a
quasi-torus \cite{OV} (which is well-known in linear algebraic group
theory, or in Lie theory, and easy to prove once all relevant
definitions are given), and the Mordell-Lang conjecture on rational
points on algebraic varieties (which is lesser known and whose proof
is much harder, even in our context \cite{Laurent}).

To simplify the exposition, we will state these classical results only when
necessary, and only in terms of real numbers (or in terms of numbers
in the quotient space $\mathbb{R}/2\pi\mathbb{Z}$ which is compact),
but the reader should be aware that the results in the cited
litterature are usually expressed in terms of points of the unit
circle.

\paragraph{}
To begin our investigation, we note that we can do the same trick as in the beginning of
the paper: From the fact $ZX \cup \{SUP_p, p \geq 3\}$ is complete for
the rational fragment, we deduce that all \emph{linear} equalities are also
provable.

It will be useful in what follows to state this theorem precisely,
which needs a few notations.

By an affine form, we mean a function of the form
$f(\overline{\alpha}) = f(\alpha_1, \dots, \alpha_k) = p_0 + \sum p_i \alpha_i$ where all
coefficients $p_i$ are integer. The form is \emph{rational} if $p_0$
is a rational angle. It is linear if $p_0 = 0$.
By a (rational) affine transformation $F$,  we mean a finite sequence
of (rational) affine  forms  $f_1 \dots f_n$, and we will write
$F(\overline{\alpha})$ for $f_1(\overline{\alpha}), \dots
f_n(\overline{\alpha})$. Notice that a linear affine  transformation
is essentially a matrix with integer coefficients.

\begin{theorem}\label{thm:paramrat}
  Let $D_1(\overline{\alpha})$ and
  $D_2(\overline{\beta})$ be two abstract rational  ZX-diagrams.

  Let $F,G$ be two rational transformations s.t. for all (vector of) angles $\overline{x}$,

  \[
  \interp{D_1(F(\overline{x}))} = \interp{D_2(G(\overline{x}))}
  \]

  Then for all angles $\overline{x}$,

  \[ ZX \cup \{SUP_p, p \geq 3\} \vdash
  D_1(F(\overline{x})) = D_2(G(\overline{x}))\]
\end{theorem}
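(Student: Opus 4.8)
The plan is to reduce the parametrized rational equality to the known completeness result for rational angles (namely that $ZX \cup \{SUP_p, p\geq 3\}$ is complete for $\mathbb{Q}\pi$), by exhibiting a single ``universal'' instance with rational angles from which the general instance follows by substitution, and then to promote that single rational proof to a proof valid for all real values of $\overline{x}$. The key observation is that each entry of $F(\overline{x})$ and $G(\overline{x})$ is an affine form $p_0 + \sum p_i x_i$ with integer coefficients $p_i$ and rational constant $p_0$; replacing the variable $x_i$ inside a green or red node by such an affine form can itself be performed \emph{diagrammatically} using only the rules of Figure~\ref{fig:dyadic}: the spider rule lets one split a node carrying angle $p_0 + \sum p_i x_i$ into a chain of nodes, one carrying the rational constant $p_0$ and, for each $i$, $|p_i|$ nodes carrying $\pm x_i$ (the sign absorbed by the $(K)$/colour-swap machinery or by the Hopf/antipode manipulations standard in the ZX literature). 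Concretely, one first rewrites $D_1(F(\overline{\alpha}))$ as a diagram of the form $\widehat D_1(\overline{\alpha})$ built only from $D_1$-shaped gadgets together with fixed rational-angle gadgets, and similarly $D_2(G(\overline{\alpha})) = \widehat D_2(\overline{\alpha})$, all provably in $ZX$ alone; this is just ``linearity of substitution'' and needs no new axiom.

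First I would make precise that for any abstract rational diagram $D$ and any rational affine transformation $H$, we have $ZX \vdash D(H(\overline\alpha)) = D^{H}(\overline\alpha)$ where $D^H$ is a fixed abstract diagram in which each original node of $D$ has been expanded, via the spider rule, into a small subdiagram whose free variables are exactly the coordinates of $\overline\alpha$ appearing with their integer multiplicities, the rational part being a constant node. This step is entirely within the $\pi/2^\bullet$-fragment machinery and, being an identity that holds for all real $\overline\alpha$, can if one wishes be invoked through Theorem~\ref{thm:gendyadic}, but it is really elementary. After this reduction, the hypothesis $\interp{D_1(F(\overline x))} = \interp{D_2(G(\overline x))}$ for all $\overline x$ becomes $\interp{\widehat D_1(\overline x)} = \interp{\widehat D_2(\overline x)}$ for all $\overline x$, i.e. we are reduced to the case where $F$ and $G$ are the identity: it suffices to prove the statement for two abstract rational diagrams $D_1(\overline\alpha)$, $D_2(\overline\alpha)$ with $\interp{D_1(\overline x)} = \interp{D_2(\overline x)}$ for all $\overline x$, concluding $ZX \cup \{SUP_p\} \vdash D_1(\overline x) = D_2(\overline x)$ for all $\overline x$.

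For that reduced statement I would fix a vector of variables $\overline x$ and specialize each $x_i$ to an \emph{arbitrary rational angle} $q_i \in \mathbb{Q}\pi$: then $D_1(\overline q)$ and $D_2(\overline q)$ are ordinary diagrams with rational angles and equal interpretations, so by the completeness theorem of~\cite{NormalForm} (restated just before) $ZX \cup \{SUP_p, p\geq 3\} \vdash D_1(\overline q) = D_2(\overline q)$. The final, and I expect main, difficulty is to pass from ``provable for every rational choice of the parameters'' to ``provable uniformly, hence for every real choice''. The standard device in the ZX literature (this is exactly the ``linear diagram'' / free-variable completeness phenomenon behind Theorem~\ref{thm:gendyadic}) is that a derivation using only the rules of Figure~\ref{fig:dyadic} and $\{SUP_p\}$ applied to the \emph{abstract} diagrams — treating the $\alpha_i$ as formal symbols — is sound for all real instantiations, because every rule in Figure~\ref{fig:dyadic} and every $SUP_p$ is stated for all real angles and is closed under affine substitution of its angle-variables. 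So the real content is: a single rational instantiation that is provable can be ``lifted'' to an abstract proof. One obtains this by choosing the $q_i$ to be algebraically independent enough (e.g. $q_i = 2\pi r_i$ with the $r_i$ rationals whose denominators are distinct large primes) so that no ``accidental'' numeric coincidence is used in the derivation: every equality of angles invoked along the rational derivation is then forced to be a formal consequence of the affine relations, hence remains valid when the $q_i$ are replaced by the formal variables $\alpha_i$. This yields an abstract derivation $ZX \cup \{SUP_p, p\geq 3\} \vdash D_1(\overline\alpha) = D_2(\overline\alpha)$, which instantiates to any real $\overline x$, and undoing the first reduction gives $ZX \cup \{SUP_p, p\geq 3\} \vdash D_1(F(\overline x)) = D_2(G(\overline x))$ for all $\overline x$, as required. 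The delicate point to get right in the write-up is precisely the genericity argument ensuring the rational proof is ``uniform''; alternatively one can cite that this is exactly how Theorem~\ref{thm:gendyadic} is proved in~\cite{ZXcomplete} and that the same argument works verbatim once $\{SUP_p\}$ is added, since $SUP_p$ shares the affine-substitution-closure property of the other rules.
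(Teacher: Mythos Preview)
The paper itself does not give a proof of this theorem: it simply says ``This theorem follows \emph{mutatis mutandis} from the similar theorem in~\cite{ZXcomplete}.'' So the benchmark here is a one-line citation, and your closing sentence (``alternatively one can cite that this is exactly how Theorem~\ref{thm:gendyadic} is proved in~\cite{ZXcomplete} and that the same argument works verbatim once $\{SUP_p\}$ is added'') is precisely what the paper does. Your first reduction step --- absorbing the rational affine transformations $F,G$ into the diagrams via the spider rule so as to reduce to $F=G=\mathrm{id}$ --- is correct and harmless.

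Where your proposal diverges from the paper, and where there is a genuine gap, is the ``genericity'' argument you put forward as the main route. You propose to pick a single rational tuple $\overline q$ (with denominators distinct large primes), invoke rational completeness to get a concrete derivation of $D_1(\overline q)=D_2(\overline q)$, and then lift that derivation to one with formal variables $\overline\alpha$ on the grounds that no ``accidental'' angle coincidence can have been used. The difficulty is circular: to rule out an accidental relation $\sum c_i q_i \equiv c_0 \pmod{2\pi}$ with $c_0\in\mathbb{Q}\pi$, you need the primes $p_i$ larger than the integer coefficients $c_i$ and the denominator of $c_0$ that can possibly occur in the derivation; but the derivation is produced \emph{after} $\overline q$ is fixed, by the rational-completeness black box, and its length, the primes $p$ for which $SUP_p$ (or the underlying normal-form machinery) is invoked, and the size of the intermediate integer coefficients all depend on $\overline q$. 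Nothing in the completeness theorem gives you an a priori bound letting you choose the $p_i$ large enough in advance. Moreover, the proof in~\cite{ZXcomplete} of Theorem~\ref{thm:gendyadic} is \emph{not} a genericity argument of this kind: it proceeds via a parametrized normal form (through the triangle/ZW translation), showing directly that the normal form of a linear diagram is again linear in the parameters. That is the ``same argument'' the paper is invoking here, and it sidesteps the circularity entirely. So your fallback citation is the right move; the genericity sketch, as written, does not stand on its own.
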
  
This theorem follows mutatis mutandis from the similar theorem in \cite{ZXcomplete}.

Our first result in this section is that this previous theorem is somehow an if and only if:
\renewcommand\labelenumi{(\roman{enumi})}
\renewcommand\theenumi\labelenumi
\begin{theorem}  
Let $D_1(\overline{\alpha})$ and $D_2 (\overline{\beta})$ be two
abstract rational diagrams of the $ZX$-calculus

Let $\overline{x}$ and $\overline{y}$ be two (vectors of) angles.

Then the following are equivalent:
\begin{enumerate}
  
  \item\label{thm:1} $ZX \cup \{ SUP_p, p \geq 3\} \vdash D_1 (\overline{x}) =
    D_2(\overline{y})$
  \item\label{thm:2} There exists $n$ s.t. for all positive integers $k \equiv 1 \mod n $:
    \[ \interp{D_1 (k\overline{x})} =
    \interp{D_2(k\overline{y})}    \]
  \item\label{thm:3} There exist rational affine transformations $F,G$  s.t.
\begin{itemize}
\item For all values of $\overline{z}$, we have
  \[ \interp{D_1(F(\overline{z}))} = \interp{D_2(G(\overline{z}))} \]
\item There exist some value of $\overline{z}$ s.t.
  $\overline{x} = F(\overline{z})$ and $\overline{y} = G(\overline{z})$
\end{itemize}    
\end{enumerate}  
\end{theorem}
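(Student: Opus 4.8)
The plan is to prove the cycle of implications $(\ref{thm:3}) \Rightarrow (\ref{thm:1}) \Rightarrow (\ref{thm:2}) \Rightarrow (\ref{thm:3})$, of which the first two directions are comparatively cheap and the last is the real work.

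\emph{$(\ref{thm:3}) \Rightarrow (\ref{thm:1})$.} This is essentially immediate from Theorem~\ref{thm:paramrat}: given the rational affine transformations $F,G$ with $\interp{D_1(F(\overline{z}))} = \interp{D_2(G(\overline{z}))}$ for all $\overline{z}$, Theorem~\ref{thm:paramrat} gives $ZX \cup \{SUP_p\} \vdash D_1(F(\overline{z})) = D_2(G(\overline{z}))$ for all $\overline{z}$, and specializing at the value of $\overline{z}$ with $F(\overline{z}) = \overline{x}$, $G(\overline{z}) = \overline{y}$ yields the claim.

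\emph{$(\ref{thm:1}) \Rightarrow (\ref{thm:2})$.} The idea is that every rule of $ZX \cup \{SUP_p\}$, read as a family of matrix identities, is itself ``stable under multiplying all free angles by $k$'' once $k$ is chosen in a suitable arithmetic progression. Concretely, a proof of $D_1(\overline{x}) = D_2(\overline{y})$ uses finitely many instances of finitely many rules; each instance involves finitely many concrete angles, which together with $\overline{x},\overline{y}$ generate a finitely generated subgroup of $\mathbb{R}/2\pi\mathbb{Z}$. I would argue that there is a modulus $n$ such that the map $\alpha \mapsto k\alpha$ (for $k \equiv 1 \bmod n$) sends every rule instance appearing in the proof to another valid rule instance — linearity of the spider/supplementarity rules means the shape is preserved, and the finitely many rational constants involved are fixed by $\alpha \mapsto k\alpha$ as soon as $n$ clears their denominators; the condition $k \equiv 1 \bmod n$ also guarantees the $SUP_p$ instances (which hinge on $\beta = 2\pi/p$) are sent to instances with the same $\beta$. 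Hence replaying the proof with all angles multiplied by $k$ proves $D_1(k\overline{x}) = D_2(k\overline{y})$, and by soundness $\interp{D_1(k\overline{x})} = \interp{D_2(k\overline{y})}$. (The rule $(A)$ is explicitly excluded, which is what makes this scaling argument available — $(A)$ is the one nonlinear rule and would break it.)

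\emph{$(\ref{thm:2}) \Rightarrow (\ref{thm:3})$.} This is where the diophantine geometry enters and is the main obstacle. Consider the ``solution set''
\[ S' = \{ e^{i\overline{z}} : \interp{D_1(\overline{z})} = \interp{D_2(\overline{z})} \} \]
(suitably set up so $\overline{x},\overline{y}$ are both coordinates of a single variable vector $\overline{z}$). Since $\interp{D_1}$ and $\interp{D_2}$ are entrywise Laurent polynomials in the $e^{iz_j}$, the set $S'$ is the real/unit-circle points of an algebraic variety; moreover it is closed under coordinatewise multiplication by torsion elements in a controlled way. Hypothesis $(\ref{thm:2})$ says the point $e^{i\overline{z}_0}$ corresponding to $(\overline{x},\overline{y})$ has the property that $e^{ik\overline{z}_0} \in S'$ for all $k \equiv 1 \bmod n$ — i.e. a full coset of a finite-index subgroup of the cyclic group generated by $e^{i\overline{z}_0}$ lies in $S'$. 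The plan is to invoke the Mordell–Lang statement (in the torus / semiabelian form, as in \cite{Laurent}) to conclude that the Zariski closure of this set of points is a finite union of torsion-cosets of subtori, so in particular $e^{i\overline{z}_0}$ lies on a translate $\overline{t} \cdot T$ of a subtorus $T$, where $\overline{t}$ is a torsion point (rational angles) contained in $S'$. Parametrizing $T$ by a matrix with integer entries and writing $\overline{t}$ via rational angles yields precisely a rational affine transformation $H$ with image inside $S'$ and $\overline{z}_0$ in its image; splitting $H$ into the block producing the $D_1$-angles and the block producing the $D_2$-angles gives the desired $F$ and $G$. I would also need the ``any commutative compact linear group is a quasi-torus'' fact to identify the relevant group structure / to reduce the closure statement to the torus case cleanly. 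The delicate points are: (a) making sure the subtorus really sits inside $S'$ and not merely in the Zariski closure of $S'$ — this requires that $S'$ itself is already a finite union of torsion-cosets, which is exactly the content of the cited Laurent-style theorem for subvarieties of tori; and (b) bookkeeping that the torsion translate corresponds to genuinely \emph{rational} angles so that $F,G$ are rational affine transformations in the sense defined above.
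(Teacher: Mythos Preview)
Your implications $(\ref{thm:3})\Rightarrow(\ref{thm:1})$ and $(\ref{thm:1})\Rightarrow(\ref{thm:2})$ are essentially the paper's argument: the first is Theorem~\ref{thm:paramrat} specialised at $\overline{z}$, and the second is exactly the ``multiply every angle in the finite proof by $kn+1$'' trick, with $n$ chosen to clear all rational denominators and all primes $p$ appearing in the finitely many $SUP_p$ instances used.

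For $(\ref{thm:2})\Rightarrow(\ref{thm:3})$ you diverge from the paper, and your write-up overcomplicates the step and contains a genuine misstatement. The paper does \emph{not} use Mordell--Lang/Laurent here; that tool is reserved for the next theorem (the ``limit of rational equalities'' characterization). For the present implication the paper argues purely topologically: the set $\{(kn+1)\overline{x}:k\geq 0\}$ lies in the continuous locus where $\interp{D_1}=\interp{D_2}$, hence so does its closure $X$ in $(\mathbb{R}/2\pi\mathbb{Z})^m$; and $X=\overline{x}+Y$ where $Y$ is the closure of the \emph{monoid} $\{kn\overline{x}\}$, which is a compact subgroup of the torus. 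The only structural input is that closed subgroups of $(\mathbb{R}/2\pi\mathbb{Z})^m$ are quasi-tori, i.e.\ finite unions of rational-affine images of the full torus. Since $n\overline{x}\in Y$, the coset $\overline{x}+Y$ is a torsion coset and one reads off the rational affine $F$ from the component containing $\overline{x}$.

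Your point (a) is where the confusion shows. You write that one needs ``$S'$ itself to already be a finite union of torsion-cosets, which is exactly the content of the cited Laurent-style theorem''. That is not what Laurent's theorem says: an arbitrary subvariety of $\mathbb{G}_m^n$ is certainly \emph{not} a finite union of torsion cosets; Laurent only describes the intersection of such a variety with a finite-rank subgroup. Fortunately you do not need this at all: the locus $\{\interp{D_1}=\interp{D_2}\}$ is closed (topologically, and Zariski-closed as a subvariety of $\mathbb{G}_m^n$), so any closure of the orbit is automatically contained in it. Your worry in (b) about the translate being torsion is legitimate but is resolved exactly as in the paper, by observing that $(e^{i\overline{z}_0})^n$ already lies in the subgroup, so the coset has finite order. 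In short: your Zariski-closure route can be made to work, but it should invoke only ``Zariski closure of a subsemigroup of $\mathbb{G}_m^n$ is an algebraic subgroup'' rather than Mordell--Lang, and the paper's topological-closure version is the cleaner way to say the same thing.
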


What this theorem means is that there is no coincidence: If a diagram
equality is provable without the rule (A), then the diagram equality
is not provable only for the angles that appear in it, but \ref{thm:2} is
also provable if all angles are multiplied by some constants and \ref{thm:3}
the diagram equality  is actually an instance of a much more general
diagram equality.

We note that it is likely that $\ref{thm:1}\rightarrow \ref{thm:3}$
could be proven directly  by a cumbersome induction on the structure of the
proof. However, we will adopt here an algebraic approach that will be
useful for the next theorem.

\begin{proof}

$\ref{thm:3}\rightarrow \ref{thm:1}$ is exactly the statement of the
previous theorem.

Now let's prove $\ref{thm:1}\rightarrow\ref{thm:2}$.

First, as a diagram with variables  $\overline{\alpha}$ is a fortiori
a diagram with variables  $\overline{\alpha}$ and $\overline{\beta}$,
we may suppose wlog that $D_1$ and $D_2$ are two diagrams with the
same set of variables $\overline{\alpha}$ and that $\overline{x} = \overline{y}$.

Now suppose that $ZX \cup \{ SUP_p, p \geq 3\} \vdash D_1(\overline{x})
= D_2(\overline{x})$. As a proof is finite, only finitely many
axioms of the form $SUP_p$ are used. Let $q$ be an upper bound of the
denominators $p$ of the rational angles $k\pi/p$ that are involved in
the diagrams 
$D_1$ and $D_2$ and on the numbers $p$ s.t.  $SUP_p$ is used in the
proof.

In particular, we have
$ZX \cup \{ SUP_p, p < q\} \vdash D_1(\overline{x}) = D_2(\overline{x})$.

  Let $n = 8q!$ and let $k$ be an integer.Consider the two diagrams $D_1(\overline{x})$ and
  $D_2(\overline{x})$ where all angles are multiplied by $kn+1$.

  By our choice of $q$, it is easy to see that all rational coefficients inside
  $D_1$ and $D_2$ do not change.
  Therefore the two new diagrams we obtain are actually equal to
  $D_1((kn+1)\overline{x})$ and $D_2((kn+1)\overline{x})$

  Furthermore, by our choice of $q$, all axioms of $ZX$ and $SUP_p, p
  < q$ remain valid
  where their angles are multiplied by $kn+1$. As a consequence, we have for all~$k$, 
  
  \[ ZX \cup \{ SUP_p, p < q\} \vdash D_1((kn+1)\overline{x}) = D_2((kn+1)\overline{x})\]

In particular  for all positive values of $k$,      $\interp{D_1((kn+1)\overline{x})} = \interp{D_2((kn+1)\overline{x})}$.
  
Now let's prove $\ref{thm:2}\rightarrow\ref{thm:3}$.
As before we may suppose wlog that $D_1$ and $D_2$ are two diagrams with the
same set of variables $\overline{\alpha}$ and that $\overline{x} =
\overline{y}$. Let $m$ be the size of $\overline{x}$.

Let $S = \{ (kn+1)\overline{x} , k \geq 0\}$, seen as a subset of
$(\mathbb{R}/2\pi \mathbb{Z})^m$ and $X$ be its topological closure. Each
individual coefficient of the interpretation of $D_1(\overline{\alpha})$ and $D_2(\overline{\alpha})$ is an
exponential polynomial in $\overline{\alpha}$, and therefore is continuous. As a
consequence, for all $\overline{z} \in X$, we have
$\interp{D_1(\overline{z})} = \interp{D_2(\overline{z})}$.

It remains to know what $X$ looks like. Such sets have been extensively studied
in the context of algebraic groups. To be more exact,  $\{ kn\overline{x}, k \geq 0\}$ is
a monoid, and its closure $Y$ is a compact subgroup of  $(\mathbb{R}/2\pi
\mathbb{Z})^m$, and each such group is a linear quasi-torus\cite{OV}, which
means that there exists a rational linear transformation $F$ and
finitely many (vectors of) rational numbers $(\overline{l}_i)_{i\in I}$ s.t.

\[ Y = \bigcup_{i \in I} \{ F (\overline{z}) + \overline{l}_i, \overline{z} \in (\mathbb{R}/2\pi \mathbb{Z})^m \} \]

It is a routine exercise\footnote{Notice that $nx \in Y$ therefore $nx = F(\overline{w}) +
\overline{l}_i$ for some $\overline{w}$ and some $i$ thus $x = F(\overline{w}/n) + \overline{\ell}$ for some
 (vector of) rational angles $\overline{\ell}$. The result follows by a change of variables}
 to show that this implies the same result for
$X = x + Y$:
\[ X = \bigcup_{i \in I} \{ F (\overline{z}) + l'_i, \overline{z} \in
(\mathbb{R}/2\pi \mathbb{Z})^m \} = \bigcup_{i \in I} \{ F_i
(\overline{z}) , \overline{z} \in (\mathbb{R}/2\pi \mathbb{Z})^m \} \]
for some rational forms $F_i$.

As a nontrivial example, let $\overline{x} = (\sqrt{2}, 2\sqrt{3},
\pi/3+\sqrt{2}, \sqrt{2}+\sqrt{3})$ and $n = 3$.
Then one can prove with some work (using linear independence of $\sqrt{2}$, $\sqrt{3}$
and $\pi$) that the topological closure $Y$ of $\{ kn\overline{x}, k \geq
0\}$ can be defined by  ${Y = \{ (t, 2u, t,u+t), (t,u) \in
(\mathbb{R}/2\pi \mathbb{Z})^m \} \cup \{ (t, 2u, \pi+t, t+u), (t,u) \in
(\mathbb{R}/2\pi \mathbb{Z})^m \}} $. Similarly, one can prove that the topological
closure $X$ of $S = \{ kn\overline{x}+1, k \geq 0\}$ is
${X = \{ (t, 2u, \pi/3+t,u+t), (t,u) \in
(\mathbb{R}/2\pi \mathbb{Z})^m \} \cup \{ (t, 2u, 4\pi/3 + t, t+u), (t,u) \in
  (\mathbb{R}/2\pi \mathbb{Z})^m \}} $.

Now $\overline{x} \in X$. Therefore there exists $i$ s.t.  $\overline{x} \in \{ F_i (\overline{z}), \overline{z} \in (\mathbb{R}/2\pi
\mathbb{Z})^m \}$. Then

\begin{itemize}
  \item $\{ F_i (\overline{z}), \overline{z} \in (\mathbb{R}/2\pi
    \mathbb{Z})^m \} \subseteq X$ and therefore
$\interp{D_1(F(\overline{z}))} = \interp{D_2(F(\overline{z}))}$ for
    all values of $\overline{z}$.
  \item $x \in \{ F_i (\overline{z}), \overline{z} \in (\mathbb{R}/2\pi\mathbb{Z})^m \}$,
    therefore there exists $\overline{z}$ s.t $x = F_i(\overline{z})$.
\end{itemize}  
\end{proof}

\clearpage
\begin{corollary}
Let $D_1(\overline{\alpha})$ and $D_2 (\overline{\beta})$ be two
rational diagrams of the $ZX$-calculus

Suppose that $ZX \cup \{ SUP_p, p \geq 3\} \vdash D_1 (\overline{x}) = D_2(\overline{y})$ for some
angles $\overline{x}, \overline{y}$.

Then there exist two sequences of (vectors of) rational angles $\overline{x^i}, \overline{y^i}$ that converges
respectively to $\overline{x}$ and $\overline{y}$ s.t.
for all values of $i$, 
\[ ZX \cup  \{ SUP_p , p \geq 3\} \vdash D_1(\overline{x^i}) = D_2(\overline{x^i})\]
  
\end{corollary}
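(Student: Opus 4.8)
The plan is to deduce this corollary directly from the preceding theorem, in particular from the implication $\ref{thm:1}\rightarrow\ref{thm:3}$. Given that $ZX \cup \{SUP_p, p\geq 3\} \vdash D_1(\overline{x}) = D_2(\overline{y})$, the theorem furnishes rational affine transformations $F = (f_1,\dots)$ and $G = (g_1,\dots)$ such that $\interp{D_1(F(\overline{z}))} = \interp{D_2(G(\overline{z}))}$ holds for \emph{all} values of $\overline{z}$, and such that $\overline{x} = F(\overline{w})$, $\overline{y} = G(\overline{w})$ for some particular real vector $\overline{w}$. The whole point is that $F$ and $G$ have rational constant terms and integer linear parts, so evaluating them at rational inputs produces rational angles.

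The key step is then a density/approximation argument: choose a sequence of rational vectors $\overline{w^i} \to \overline{w}$ (possible since $\mathbb{Q}^m$ is dense in $\mathbb{R}^m$), and set $\overline{x^i} = F(\overline{w^i})$ and $\overline{y^i} = G(\overline{w^i})$. Since each $f_j$ and $g_j$ is an affine form with integer linear coefficients and a rational constant term, $\overline{x^i}$ and $\overline{y^i}$ are vectors of rational angles. By continuity of the affine maps $F$ and $G$, we get $\overline{x^i} \to F(\overline{w}) = \overline{x}$ and $\overline{y^i} \to G(\overline{w}) = \overline{y}$. Finally, since $\interp{D_1(F(\overline{z}))} = \interp{D_2(G(\overline{z}))}$ holds identically in $\overline{z}$, Theorem~\ref{thm:paramrat} (the parametrized rational completeness statement) gives $ZX \cup \{SUP_p, p\geq 3\} \vdash D_1(F(\overline{w^i})) = D_2(G(\overline{w^i}))$, i.e. $ZX \cup \{SUP_p, p\geq 3\} \vdash D_1(\overline{x^i}) = D_2(\overline{y^i})$, for every $i$. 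This is exactly the claimed sequence of provable rational-angle equalities.

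One small point to address is the mismatch between the corollary's statement, which writes the conclusion as $D_1(\overline{x^i}) = D_2(\overline{x^i})$ (same argument vector on both sides), and the more general $D_1(\overline{x^i}) = D_2(\overline{y^i})$ that the argument naturally produces; as in the proof of the theorem, one reduces to this case by regarding $D_1$ and $D_2$ as diagrams over the common variable set $\overline{\alpha}\cup\overline{\beta}$, so that a single vector of angles indexes both sides. There is no real obstacle here: the heavy lifting — the structure theorem for compact abelian groups and the reduction to rational affine transformations — has already been done inside the proof of the theorem, and the corollary is essentially a corollary of the \emph{shape} of $F$ and $G$ together with the density of $\mathbb{Q}$ in $\mathbb{R}$. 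The only thing requiring a moment's care is checking that $F$ and $G$ send rational vectors to vectors of \emph{rational angles}, which is immediate from the definition of a rational affine transformation (integer linear part, rational-angle constant term).
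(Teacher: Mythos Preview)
Your proposal is correct and follows essentially the same approach as the paper's own proof: apply $\ref{thm:1}\Rightarrow\ref{thm:3}$ to obtain the rational affine transformations $F,G$ and the parameter $\overline{z}$ (your $\overline{w}$), approximate $\overline{z}$ by rational vectors $\overline{z}^i$, and set $\overline{x}^i=F(\overline{z}^i)$, $\overline{y}^i=G(\overline{z}^i)$. Your write-up is in fact more explicit than the paper's one-line proof, spelling out the appeal to Theorem~\ref{thm:paramrat} for provability and correctly flagging the $D_2(\overline{x^i})$ typo in the statement.
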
  
\begin{proof}
  Use $\ref{thm:1}\rightarrow\ref{thm:3}$ in the previous theorem, and choose a sequence of rational angles $\overline{z}^i$ that converges to
  $\overline{z}$ and take $\overline{x}^i = F(\overline{z}^i)$ and $\overline{y}^i = G(\overline{z}^i)$.
\end{proof}  

This is actually a characterization:
\begin{theorem}
Let $D_1(\overline{\alpha})$ and $D_2 (\overline{\beta})$ be two
rational diagrams of the $ZX$-calculus.

Let $\overline{x^i}, \overline{y^i}$ be two sequences of rational angles that converge
respectively to $\overline{x}$ and $\overline{y}$ and s.t.
for all values of $i$, 
\[ \interp{D_1(\overline{x^i})} = \interp{D_2(\overline{y^i})}\]

Then
\[ ZX \cup  \{ SUP_p , p \geq 3\} \vdash D_1(\overline{x}) = D_2(\overline{y})\]
\end{theorem}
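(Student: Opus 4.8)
The plan is to reduce this statement to the implication $\ref{thm:2}\rightarrow\ref{thm:1}$ of the preceding theorem (equivalently $\ref{thm:3}\rightarrow\ref{thm:1}$, which is Theorem~\ref{thm:paramrat}), by manufacturing, out of the convergent sequences $\overline{x^i},\overline{y^i}$, a single rational affine transformation through which the limit equality factors. The key observation is that although the semantics is only continuous (so passing to the limit immediately gives $\interp{D_1(\overline{x})} = \interp{D_2(\overline{y})}$, which is \emph{not} by itself enough, since completeness fails for non-rational angles), the \emph{provability} region is not just closed but has the rigid algebraic structure exploited in the previous theorem: it is a finite union of translates of a rational subtorus. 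So the real content is to show that $(\overline{x},\overline{y})$ lies on one of these pieces.

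Concretely, I would first merge variables as in the previous proof, so that $D_1$ and $D_2$ share the same variable vector and $\overline{x^i} = \overline{y^i}$, $\overline{x} = \overline{y}$. Let $T = \{ \overline{z} \in (\mathbb{R}/2\pi\mathbb{Z})^m \mid \interp{D_1(\overline{z})} = \interp{D_2(\overline{z})}\}$ be the full semantic equality locus; each coordinate of $\interp{D_1}-\interp{D_2}$ is an exponential polynomial, so $T$ is a closed (real-analytic) subset of the compact torus. The rational points $\overline{x^i}$ all lie in $T$ (by hypothesis, interpreting a rational point $k\pi/N$ as a point of the torus), and by the result of \cite{NormalForm} (completeness of $ZX \cup \{SUP_p\}$ on the rational fragment) each rational point of $T$ is in fact a \emph{provability} point. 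Now I would invoke the Mordell--Lang input mentioned in the excerpt (the reference to \cite{Laurent} right before the section, "rational points on algebraic varieties"): the Zariski/analytic closure of the torsion (equivalently, rational) points of $T$ is again a finite union of translates of rational subtori of $(\mathbb{R}/2\pi\mathbb{Z})^m$, say $\bigcup_i \{F_i(\overline{z}) : \overline{z}\}$ with $F_i$ rational affine transformations. Since all $\overline{x^i}$ are rational points of $T$ and $\overline{x^i}\to\overline{x}$, the limit $\overline{x}$ lies in this closure, hence on some piece: $\overline{x} = F_j(\overline{z}_0)$ for some $\overline{z}_0$, and $\{F_j(\overline{z})\}\subseteq T$ so $\interp{D_1(F_j(\overline{z}))} = \interp{D_2(F_j(\overline{z}))}$ for all $\overline{z}$.

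At that point Theorem~\ref{thm:paramrat} applies with $F = G = F_j$ and the chosen value $\overline{z}_0$, yielding $ZX \cup \{SUP_p, p\geq 3\} \vdash D_1(F_j(\overline{z}_0)) = D_2(F_j(\overline{z}_0))$, i.e. $ZX \cup \{SUP_p, p\geq 3\} \vdash D_1(\overline{x}) = D_2(\overline{y})$, which is the claim. The main obstacle is the invocation of Mordell--Lang in the precise form needed: one must know that the closure of the \emph{rational} (torsion) points of a real-analytic set cut out by exponential polynomials on a torus is a finite union of rational affine subtori. This is the deep ingredient (the previous proof only needed the much softer fact that the closure of a monoid is a quasi-torus); I expect the bulk of the work to be setting up the correct variety/group-theoretic statement so that \cite{Laurent} applies, and checking that "converges to $\overline{x}$" together with "all terms rational and in $T$" really does force $\overline{x}$ onto one of the torsion-dense components rather than onto some stray accumulation point of $T$ not captured by the torsion closure — which is exactly what Mordell--Lang guarantees.
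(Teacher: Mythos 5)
Your proposal is correct and follows essentially the same route as the paper: reduce to shared variables, observe that the rational solution locus is (by Laurent's theorem on torsion points of subvarieties of tori) a finite union of rational affine subtori, locate $\overline{x}$ on one closed piece $F_j$, and invoke Theorem~\ref{thm:paramrat}. The only cosmetic difference is that you pass $\overline{x}$ into the union by noting each piece is closed, whereas the paper pigeonholes infinitely many $\overline{x^i}$ into a single piece and extracts a convergent subsequence of preimages $\overline{z^i}$; the two devices are interchangeable here.
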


\begin{corollary}
  A limit of provable equalities is again a provable equality.
\end{corollary}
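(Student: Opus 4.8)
\emph{Reduction of the Corollary to the preceding Theorem.} A ``limit of provable equalities'' is, by definition, a pair of sequences $D_1^n,D_2^n$ with rational angles, each structurally identical to $D_1,D_2$, with $ZX\cup\{SUP_p,\,p\geq 3\}\vdash D_1^n=D_2^n$ and with the angle vectors converging to those of $D_1,D_2$. Since every rule of Figure~\ref{fig:dyadic} and every $SUP_p$ is sound, the provable equalities give $\interp{D_1^n}=\interp{D_2^n}$ for all $n$; writing $D_1^n=D_1(\overline{x^n})$, $D_2^n=D_2(\overline{y^n})$ with $\overline{x^n}\to\overline{x}$, $\overline{y^n}\to\overline{y}$, the preceding Theorem applies verbatim and gives $ZX\cup\{SUP_p,\,p\geq 3\}\vdash D_1=D_2$. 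So the whole content sits in that Theorem, and I describe a plan for it.

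\emph{Main idea.} The decisive observation is that an angle $x\in\mathbb{Q}\pi$ is precisely one for which $e^{ix}$ is a root of unity. Hence the hypothesis of the Theorem says: a whole sequence of \emph{torsion points} of an algebraic torus lies on a fixed algebraic subvariety, and these torsion points converge (for the Euclidean topology) to the point attached to $(\overline{x},\overline{y})$. This is exactly the situation controlled by the toric case of Mordell--Lang / Manin--Mumford, i.e.\ Laurent's theorem \cite{Laurent}, which is the diophantine input promised in the introduction and the only step that is not elementary.

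\emph{The plan.} Say $D_1$ has $m$ variables and $D_2$ has $m'$. Writing $X_j=e^{iu_j}$ and $Y_k=e^{iv_k}$, every matrix coefficient of $\interp{D_1(\overline{u})}$ (resp.\ $\interp{D_2(\overline{v})}$) is a polynomial in the $X_j$ (resp.\ the $Y_k$) with algebraic coefficients (roots of unity from the fixed rational angles, powers of $1/\sqrt2$ from $H$-boxes), so
\[ V=\{(X,Y)\in(\mathbb{C}^*)^{m+m'}\ :\ \interp{D_1}_{ab}(X)=\interp{D_2}_{ab}(Y)\text{ for all entries }(a,b)\} \]
is a closed subvariety of the torus $\mathbb{G}_m^{m+m'}$, with $(e^{i\overline{u}},e^{i\overline{v}})\in V\iff\interp{D_1(\overline{u})}=\interp{D_2(\overline{v})}$. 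By hypothesis the torsion points $\zeta^{(n)}:=(e^{i\overline{x^n}},e^{i\overline{y^n}})$ all lie in $V$. By Laurent's theorem, $V\cap\mathrm{Tors}(\mathbb{G}_m^{m+m'})$ is contained in a finite union $\bigcup_{j=1}^{r}\xi_jT_j$ of torsion cosets of subtori with $\xi_jT_j\subseteq V$ for each $j$; this union is Zariski-closed, hence Euclidean-closed, hence contains the limit $\zeta^{\infty}:=(e^{i\overline{x}},e^{i\overline{y}})$, so $\zeta^{\infty}\in\xi_{j_0}T_{j_0}\subseteq V$ for some $j_0$. Pulling $\xi_{j_0}T_{j_0}$ back along $\overline{t}\mapsto e^{i\overline{t}}$ yields an affine subtorus $\{\overline{\ell}+C\overline{t}:\overline{t}\in(\mathbb{R}/2\pi\mathbb{Z})^{d}\}$ of $(\mathbb{R}/2\pi\mathbb{Z})^{m+m'}$ with $C$ an integer matrix and $\overline{\ell}$ a vector of rational angles (as $\xi_{j_0}$ is torsion); splitting the coordinates into the first $m$ and the last $m'$ gives two rational affine transformations $F,G$. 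The inclusion $\xi_{j_0}T_{j_0}\subseteq V$ says exactly $\interp{D_1(F(\overline{t}))}=\interp{D_2(G(\overline{t}))}$ for all $\overline{t}$, and $\zeta^{\infty}\in\xi_{j_0}T_{j_0}$ gives some $\overline{t_0}$ with $F(\overline{t_0})\equiv\overline{x}$, $G(\overline{t_0})\equiv\overline{y}\pmod{2\pi}$ (the congruence is harmless: adding $2\pi$ to a node angle does not change a diagram provably, e.g.\ by Theorem~\ref{thm:gendyadic}). Now Theorem~\ref{thm:paramrat} applied to $D_1,D_2,F,G$ gives $ZX\cup\{SUP_p,\,p\geq 3\}\vdash D_1(F(\overline{z}))=D_2(G(\overline{z}))$ for all $\overline{z}$; specialising at $\overline{z}=\overline{t_0}$ yields $ZX\cup\{SUP_p,\,p\geq 3\}\vdash D_1(\overline{x})=D_2(\overline{y})$. (Equivalently, $F,G$ are witnesses for condition (iii) of the characterisation theorem above, so one concludes by (iii)$\Rightarrow$(i).)

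\emph{The main obstacle.} The only genuinely hard step is the appeal to Laurent's theorem: it is what guarantees that torsion points on $V$ accumulate only along positive-dimensional torsion cosets that are themselves contained in $V$. Without it, the limit point $\zeta^{\infty}$ — which need not be torsion — could a priori fail to lie on any torsion coset inside $V$, and the argument would collapse. Everything else is routine bookkeeping: the dictionary between subtori and integer-affine subtori, the fact that a Zariski-closed subset of the torus is Euclidean-closed (so it captures the limit), the passage from a single coset to rational affine transformations, and the $2\pi$-cleanup; the write-up should be short once Theorem~\ref{thm:paramrat} and \cite{Laurent} are in hand.
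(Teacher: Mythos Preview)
Your reduction in the first paragraph misreads the scope of the Corollary. You stipulate that the approximating sequences $D_1^n,D_2^n$ have \emph{rational} angles, but the paper's own proof of this Corollary begins with sequences $\overline{x^i},\overline{y^i}$ of \emph{arbitrary} angles for which $D_1(\overline{x^i})=D_2(\overline{y^i})$ is provable; that is the natural reading of ``a limit of provable equalities'', since provability in $ZX\cup\{SUP_p\}$ in no way forces the angles involved to be rational. Under that reading your reduction fails: the preceding Theorem requires \emph{rational} approximants, and soundness only gives you $\interp{D_1(\overline{x^i})}=\interp{D_2(\overline{y^i})}$ for possibly irrational $\overline{x^i},\overline{y^i}$. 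The paper bridges this with the one step you omit: by the \emph{earlier} Corollary (the one stated just before the Theorem, saying that a provable equality is approximable by rational provable equalities), each $(\overline{x^i},\overline{y^i})$ can itself be replaced by nearby \emph{rational} angles with the equality still provable; a diagonal choice then yields rational sequences converging to $(\overline{x},\overline{y})$, and only at that point does the preceding Theorem apply.

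Everything after your first paragraph is a sketch of the \emph{preceding Theorem} itself, not of the Corollary. It is essentially the paper's own argument for that Theorem (Laurent's theorem produces a finite union of torsion cosets contained in $V$; the union is closed so the limit point lands in some $\xi_{j_0}T_{j_0}\subseteq V$; pull back to rational affine $F,G$; invoke Theorem~\ref{thm:paramrat}), and it is correct. But that Theorem is stated and proved immediately above the Corollary, so for the Corollary this material is redundant: the paper's proof of the Corollary is three lines long and uses only the two results that precede it.
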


\begin{proof}[Proof ot the Corollary]
Let $\overline{x^i}, \overline{y^i}$ be two sequences of angles that converge
respectively to $\overline{x}$ and $\overline{y}$ and s.t.
for all values of $i$, 
\[ ZX \cup  \{ SUP_p , p \geq 3\} \vdash D_1(\overline{x^i}) = D_2(\overline{y^i})\]

By the previous corollary, we may find, for each $i$,  \emph{rational} angles
arbitrarily close to $\overline{x^i}$ and $\overline{y^i}$  s.t. the equality remains
provable for these angles.

But this implies that we can find rational angles arbitrarility close
to $\overline{x}$ and $\overline{y}$  s.t.  the equality remains
provable for these angles.
Which implies by the previous theorem that $ZX \cup  \{ SUP_p , p \geq
3\} \vdash
D_1(\overline{x}) = D_2(\overline{y})$
\end{proof}

\begin{proof}[Proof of the Theorem]
  As before, by seeing $D_1$ and $D_2$ as diagrams over a bigger set of
  variables, we may suppose wlog that $\overline{x^i} = \overline{y^i}$.

  Now, the key idea is to consider
  \[
X = \{ \overline{z} \in (2\pi\mathbb{Q}/2\pi
\mathbb{Z})^m| \interp{D_1(\overline{z})} = \interp{D_2(\overline{z})} \}
\]
$X$ is therefore the set of  (vectors of) rational angles for which the
diagram equality is true.
We note that all $\overline{x^i}$  are in  $X$.

Now take
\[ Y =   \{ e^{i\overline{z}}, z \in X\} \]

Each individual coefficient of $D_1$ and $D_2$ are polynomials in $e^{iz_1}
\dots e^{iz_l}$.
The set $Y$ itself can therefore be seen as the intersection between the set of
zeroes of some polynomials, and the set of (vectors consisting of) roots of unity.

Such sets have been widely used in the context of the Mordell-Lang conjecture,
and it is known that each of them is the finite union of translates of algebraic
groups, which is in our particular case a result of Laurent \cite{Laurent}.
In our formalism, this can be rewritten as follows: There exists 
finitely many rational affine transformations
$(F_j)_{j \in J}$ s.t. (as before) 

\[ X = \bigcup_{j \in J} \{ F_j (\overline{z}), \overline{z} \in (2\pi\mathbb{Q}/2\pi \mathbb{Z})^m \} \]

Now as $J$ is finite,  infinitely many of the $\overline{x^i}$ are in the same set, say
$F_j$. 
Write $S = \{  F_j (\overline{z}), \overline{z} \in
(2\pi\mathbb{Q}/2\pi \mathbb{Z})^m \}$.
By taking a subsequence, we may suppose that all of the $\overline{x}^i$ are in $S$.

\begin{itemize}
\item $S \subseteq X$. Therefore for all $\overline{w} \in S$,  $\interp{D_1(\overline{w})} =
  \interp{D_2(\overline{w})}$.
\item  Therefore for all $\overline{z}\in   (2\pi\mathbb{Q}/2\pi
  \mathbb{Z})^m$, $\interp{D_1(F_j(\overline{z}))} =
  \interp{D_2(F_j(\overline{z}))}$. By continuity this is also true for any
  $\overline{z} \in (\mathbb{R}/2\pi\mathbb{Z})^m$.  
\item Therefore by Theorem~\ref{thm:paramrat}, the equality is
  not only true but provable: For all $\overline{z}$, we have
  $ZX \cup \{ SUP_p, p\geq 3\} \vdash D_1(F_j(\overline{z})) =
  D_2(F_j(\overline{z}))$
\item Now it remains to show that there exists some $\overline{z}$ s.t.
  $F_j(\overline{z}) = \overline{x}$ which will prove that
$ZX \cup \{ SUP_p, p\geq 3\} \vdash D_1(\overline{x})) =
  D_2(\overline{x})$. But this is obvious: For each $i$, as $\overline{x^i} \in S$, we may find $\overline{z^i}  \in (\mathbb{R}/2\pi\mathbb{Z})^m$
  s.t.   $F_j(\overline{z}^i) = \overline{x}^i$ and we may suppose (upto a
  converging subsequence) that $\overline{z}^i$  converges to some
  $\overline{z}$. And for this particular $\overline{z}$ we have $F_j(\overline{z}) = \overline{x}$.
\end{itemize}  
\end{proof}

\section{Applications to Euler transforms}
As an application of the previous result, we will characterize in this section
the set of Euler equalities  that are provable without rule $(A)$.

By a Euler equality, we mean an equality of the form:
\newcommand\euler[6]{
\begin{tikzpicture}
\begin{pgfonlayer}{nodelayer}
\node [style=none] (0) at (1, 1) {};
\node [style=gn,label={[label position=left]#3}] (1) at (1, 1.5) {};
\node [style=rn,label={[label position=left]#2}] (2) at (1, 2) {};
\node [style=gn,label={[label position=left]#1}] (3) at (1, 2.5) {};
\node [style=none] (4) at (1, 3) {};
\node [style=none] (eq) at (1.5, 2) {$=$};
\node [style=none] (a0) at (2, 1) {};
\node [style=rn,label={[label position=right]#6}] (a1) at (2, 1.5) {};
\node [style=gn,label={[label position=right]#5}] (a2) at (2, 2) {};
\node [style=rn,label={[label position=right]#4}] (a3) at (2, 2.5) {};
\node [style=none] (a4) at (2, 3) {};
\end{pgfonlayer}
\begin{pgfonlayer}{edgelayer}
\draw[out=270, in= 90] (1.center) to (0.center);
\draw[out=270, in= 90] (2.center) to (1.center);
\draw[out=270, in= 90] (3.center) to (2.center);
\draw[out=270, in= 90] (4.center) to (3.center);
\draw[out=270, in= 90] (a1.center) to (a0.center);
\draw[out=270, in= 90] (a2.center) to (a1.center);
\draw[out=270, in= 90] (a3.center) to (a2.center);
\draw[out=270, in= 90] (a4.center) to (a3.center);
\end{pgfonlayer}
\end{tikzpicture}
}

\begin{center}
  \euler{$\alpha_1$}{$\alpha_2$}{$\alpha_3$}{$\beta_1$}{$\beta_2$}{$\beta_3$}
\end{center}
possibly also involving (nonzero) scalars. For simplicity of the presentation, we will not
represent the scalars in the proofs.

Our main result is the following:

\begin{theorem}
The only Euler equalities that are provable in $ZX \cup \{ SUP_p, p
\geq 3\}$ are:

\begin{center}
\euler{$n\pi$}{$(-1)^n\alpha_2$}{$\alpha_3+n\pi$}{$\alpha_2+m\pi$}{$(-1)^m \alpha_3$}{$m\pi$}
\euler{$(m+n)\pi$}{$(-1)^n\alpha_2$}{$n\pi$}{$(-1)^m\beta_1$}{$m\pi$}{$\alpha_2-\beta_1$}
\end{center}
\begin{center}
\euler{$m\pi-\alpha_2$}{$n\pi$}{$(-1)^n \alpha_2$}{$(-1)^m
  \alpha_3$}{$m\pi$}{$n\pi-\alpha_3$}
\end{center}
\begin{center}
\euler{$n\pi+\pi/2$}{$n\pi+\pi/2$}{$n\pi+\alpha_3$}{$m\pi+\alpha_3$}{$m\pi+ \pi/2$}{$m\pi+\pi/2$}
\euler{$n\pi-\pi/2$}{$n\pi+\pi/2$}{$n\pi+\alpha_3$}{$m\pi-\alpha_3$}{$m\pi-
  \pi/2$}{$m\pi+\pi/2$}
\end{center}
\begin{center}
\euler{$n\pi+\pi/2$}{$n\pi+\alpha_3$}{$m\pi-\pi/2$}{$n\pi-\pi/2$}{$m\pi+\alpha_3$}{$m\pi+
  \pi/2$}
\euler{$n\pi+\pi/2$}{$n\pi+\alpha_3$}{$m\pi-\pi/2$}{$n\pi+\pi/2$}{$m\pi-\alpha_3$}{$m\pi- \pi/2$}
\end{center}
and their color-swapped variants.
\end{theorem}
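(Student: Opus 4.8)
The strategy is to reduce the classification of Euler equalities provable in $ZX \cup \{SUP_p, p\geq 3\}$ to a purely analytic/number-theoretic question about rotations of the sphere, using the characterization theorem just proved. First I would recall that a Euler equality with angles $(\alpha_1,\alpha_2,\alpha_3)$ on the left and $(\beta_1,\beta_2,\beta_3)$ on the right, together with a scalar, holds semantically iff the two composites represent the same $2\times 2$ matrix up to the scalar, which (after normalizing determinants) amounts to an equality of elements of $SU(2)$, i.e. an equality between two products $R_X(\beta_1)R_Z(\beta_2)R_X(\beta_3)$-type rotations and $R_Z(\alpha_1)R_X(\alpha_2)R_Z(\alpha_3)$-type rotations in $SO(3)$. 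So the semantic solution set $S = \{(\overline{\alpha},\overline{\beta}) : \interp{D_1} = \interp{D_2}\}$ is a real algebraic subset of the six-torus, and by the preceding theorem an equality with rational data $(\overline{x},\overline{y})$ is \emph{provable} (without rule $(A)$) iff $(\overline{x},\overline{y})$ lies on a translate, by rational angles, of a connected algebraic subgroup contained in $S$ — equivalently iff it is a specialization of a rational parametrized family $D_1(F(\overline{z})) = D_2(G(\overline{z}))$ that holds identically.

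The core computation is therefore: determine all maximal connected subgroups $H$ of $(\mathbb{R}/2\pi\mathbb{Z})^6$ (equivalently, all linear families $F,G$ with integer coefficients) such that $R_Z(f_1)R_X(f_2)R_Z(f_3) = R_X(g_1)R_Z(g_2)R_X(g_3)$ holds as an identity in $SO(3)$ for all values of the parameters, and then intersect each such $H$ with its rational-angle translates. I would organize this by dimension of the family. The zero-dimensional (constant) solutions are the finitely many "Clifford-like" coincidences where all six angles are multiples of $\pi/2$; these account for the purely-numeric entries. The one-parameter families are the interesting ones: here one free angle $\alpha_i$ (or a linear combination) survives, and one uses elementary facts about Euler decompositions — uniqueness of the Euler angles away from gimbal lock, the behavior at gimbal lock ($\alpha_2 \equiv 0$ or $\pi$, where a one-parameter family of decompositions appears), and the standard rewrites $R_X(\pi)R_Z(\theta)R_X(\pi) = R_Z(-\theta)$, $R_Z(\pi/2)R_X(\theta)R_Z(\pi/2) = R_X(\pi/2)R_Z(-\theta)R_X(\pi/2)$ and their variants. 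Each listed equality should be matched to exactly one such family (with the two integer parameters $m,n$ recording the sign/branch choices coming from the two-to-one cover $SU(2)\to SO(3)$ and from $R(\theta) = R(\theta+2\pi)$), and conversely one checks no other one-parameter family exists. There can be no family of dimension $\geq 2$: a generic two-parameter family of left-hand composites already sweeps a two-dimensional subvariety of $SO(3)$ that cannot be matched identically by the right-hand form without forcing a third parameter, which would make the equality trivial in a way already covered.

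The main obstacle is the exhaustive case analysis in the one-parameter regime: one must show the list is \emph{complete}, not merely that each listed item is provable. Concretely, given an arbitrary integer-linear family $F,G$ for which the $SO(3)$ identity holds, I would argue that after a change of parameters the surviving free angle must enter each side in one of finitely many normal forms (because the map $\theta \mapsto R_Z(a\theta+c)R_X(b\theta+d)\cdots$ has a rigidly constrained functional form once it is required to equal another such expression for all $\theta$ — comparing matrix entries as trigonometric polynomials in $\theta$ pins down the integer coefficients $a,b,\dots$ to a short list). Then, for each surviving combinatorial type, solving the resulting finite system for the constant offsets modulo $2\pi\mathbb{Q}$, and recording the residual freedom as the integers $m,n$, yields precisely the tabulated equalities together with their color-swapped duals (obtained by conjugating everything by $H$, which swaps the roles of $R_Z$ and $R_X$). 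I would present the constant-solution count and the one-parameter normal-form lemma as the two technical lemmas, and then the theorem follows by assembling the cases and invoking the characterization theorem to pass from "holds identically" to "provable in $ZX\cup\{SUP_p\}$".
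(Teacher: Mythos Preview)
Your high-level strategy---use the characterization theorem from the previous section to reduce the question to ``which parametrized (subtorus) families lie inside the semantic solution set''---is sound and matches the paper's spirit. But your execution diverges from the paper and contains a concrete error.

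First, the error: your claim that ``there can be no family of dimension $\geq 2$'' is false. Look at the first three equalities in the list: each has \emph{two} free real parameters (e.g.\ $\alpha_2,\alpha_3$ in the first). These are genuine two-dimensional subtori of the solution set, corresponding to the gimbal-lock degeneracy of the Euler decomposition when one of the middle/outer angles is a multiple of $\pi$. Your heuristic that a two-parameter left-hand family ``cannot be matched identically'' by the right-hand form is precisely wrong in this regime. So your proposed organization by dimension (constants, then one-parameter, then nothing) would miss most of the list.

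Second, a methodological gap: the paper does \emph{not} attack the problem by a direct trigonometric-polynomial analysis of identities in $SO(3)$. Instead it (a) uses the limit characterization to reduce to finding all Euler equalities with \emph{all six angles rational}, and then (b) invokes a number-theoretic theorem of Radin--Sadun: any alternating chain $R_Z(\alpha_1)R_X(\alpha_2)\cdots$ with $\alpha_i\in\mathbb{Q}\pi$ that represents the identity up to scalar must have some $\alpha_i\in\{0,\pi\}$ or two consecutive $\alpha_i\in\{\pm\pi/2\}$. This single external lemma collapses the case analysis to a short finite list; each case is then reduced, by elementary ZX rewrites, to a four-node equality $R_Z(\cdot)R_X(\cdot)=R_X(\cdot)R_Z(\cdot)$, which the paper solves by a direct $2\times2$ matrix computation. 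Your plan to ``compare matrix entries as trigonometric polynomials in $\theta$'' would in principle reprove Radin--Sadun by hand for this specific length, which is doable but substantially messier, and you give no indication of how to carry out that finite but nontrivial computation. The paper's route---Radin--Sadun plus reduction to the four-node proposition---is both different from yours and what actually makes the enumeration tractable.
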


Our method to prove the result is obvious from the previous sections: We only
have to find the Euler equalities satisfied by rational angles to deduce the
equalities satisfied by arbitrary angles.

To find Euler equalities for rational angles, we will use the following result,
original formulated in the context of tilings\footnote{The astute
  reader should remark that this theorem subsumes most of the results
  of Backens \cite{BackensOne} and Mastumoto-Amano \cite{Matsumoto}.}:

\begin{theorem}[\cite{RadinSadun}]  
  \label{thm:radinsadun}
  Let $\alpha_1 \dots \alpha_n \in \mathbb{Q}\pi$ s.t. the diagram\footnotemark
\begin{center}
\begin{tikzpicture}
\begin{pgfonlayer}{nodelayer}
\node [style=none] (0) at (0.5, 2) {};
\node [style=grn,label={[label position=right]$\alpha_n$}] (1) at (0.5, 2.5) {};
\node [style=none] (1.bis) at (0.6, 3) {$\vdots$};
\node [style=rn,label={[label position=right]$\alpha_2$}] (2) at (0.5, 3.5) {};
\node [style=gn,label={[label position=right]$\alpha_1$}] (3) at (0.5, 4) {};
\node [style=none] (4) at (0.5, 4.5) {};
\end{pgfonlayer}
\begin{pgfonlayer}{edgelayer}
\draw[out=270, in= 90] (1.center) to (0.center);
\draw[out=270, in= 90] (2.center) to (1.center);
\draw[out=270, in= 90] (3.center) to (2.center);
\draw[out=270, in= 90] (4.center) to (3.center);
\end{pgfonlayer}
\end{tikzpicture}
\end{center}
represents the identity matrix up to a scalar. Then either there exists $i$
s.t. $\alpha_i \in \{0,\pi\}$ or there exists $i$ s.t. $\{ \alpha_i,
\alpha_{i+1}\} \subseteq \{  \pi/2, -\pi/2\}$.
\end{theorem}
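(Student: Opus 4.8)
The plan is to restate the diagrammatic identity as an identity between products of rotations and then exploit the arithmetic of cyclotomic fields. A $1\to 1$ green spider with angle $\alpha$ has interpretation $\mathrm{diag}(1,e^{i\alpha})$, which equals $e^{i\alpha/2}$ times the $SU(2)$ rotation $R_Z(\alpha)=\mathrm{diag}(e^{-i\alpha/2},e^{i\alpha/2})$ about the $z$-axis; a red spider is its Hadamard conjugate $R_X(\alpha)$, a rotation about the $x$-axis. Hence the chain represents the identity up to a scalar exactly when the product $M=R_{*}(\alpha_1)R_{*}(\alpha_2)\cdots R_{*}(\alpha_n)\in SU(2)$, with axes alternating $z,x,z,\dots$, equals $\pm I$, and since $M$ is unitary this happens if and only if $\mathrm{tr}(M)=\pm 2$. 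It therefore suffices to show: if no $\alpha_i$ is $\equiv 0$ or $\pi\pmod{2\pi}$ and no two consecutive $\alpha_i,\alpha_{i+1}$ both lie in $\{\tfrac{\pi}{2},-\tfrac{\pi}{2}\}\pmod{2\pi}$ (call such a chain \emph{reduced}), then $\mathrm{tr}(M)\neq\pm2$.

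Write $\alpha_i=2\pi a_i/b_i$ in lowest terms and $N=\mathrm{lcm}(b_i)$. Every entry of $M$ lies in the cyclotomic field $K=\mathbb{Q}(e^{i\pi/N})$; more precisely, if $r$ is the number of red spiders then $2^{r}M\in M_2(\mathcal{O}_K)$, so $2^{r}\mathrm{tr}(M)$ is an algebraic integer. The two ingredients I would combine are: (i) the Galois action --- for $j$ coprime to $2N$ the automorphism $e^{i\pi/N}\mapsto e^{ij\pi/N}$ sends $\mathrm{tr}(M(\overline{\alpha}))$ to $\mathrm{tr}(M(j\overline{\alpha}))$, so a relation $\mathrm{tr}(M)=\pm2$ forces $\mathrm{tr}(M(j\overline{\alpha}))=\pm2$ for every such $j$; and (ii) size estimates, both archimedean ($\lvert\mathrm{tr}(M(j\overline{\alpha}))\rvert\le2$ unconditionally, a Kronecker/Niven-type rigidity once combined with (i)) and $\mathfrak{p}$-adic, at a place $\mathfrak{p}$ of $K$ above a prime $p\mid N$. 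Note that the Hadamard conjugations contribute only powers of $\sqrt2$, so the whole obstruction lives in the roots of unity $e^{i\alpha_i/2}$; equivalently, $\mathrm{tr}(M)-(\pm 2)$ and the off-diagonal entry $M_{12}$ are vanishing sums of roots of unity, which one may also analyze through the Conway--Jones/Mann classification.

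The core of the proof is to show that for a reduced chain these cyclotomic identities cannot hold. The clean case is when some denominator $b_i$ has an odd prime factor: taking the largest such prime $p$, localizing at a place $\mathfrak{p}\mid p$ of $\mathcal{O}_K$, and extracting a dominant term of $M_{12}$ in the uniformizer at $\mathfrak{p}$, one sees that a rotation of order divisible by $p$ cannot be cancelled by the others, so $M_{12}\neq0$. The delicate case is when every $b_i$ is a power of $2$; there the relevant group is essentially built from the octahedral group $\langle R_Z(\pi/2),R_X(\pi/2)\rangle$, in which $R_Z(\pm\tfrac{\pi}{2})R_X(\pm\tfrac{\pi}{2})$ realises the order-three coordinate cycle $x\to y\to z\to x$, so that $\bigl(R_Z(\pm\tfrac{\pi}{2})R_X(\pm\tfrac{\pi}{2})\bigr)^3=\pm I$ --- the relation recorded by the ``consecutive $\pm\pi/2$'' clause --- while $R_Z(\pi)=\mp i\sigma_z$ conjugating the $x$-axis rotations is the phenomenon recorded by the ``$\alpha_i\in\{0,\pi\}$'' clause. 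One then runs a finite but careful $2$-adic case analysis (the Hadamard $\sqrt2$'s and the wild ramification of $2$ make this prime the only subtle one) to check that, once these two families of collapses are forbidden, no reduced word can be trivial. This last step --- ruling out accidental relations at the prime $2$ --- is the main obstacle, and constitutes the technical heart of the cited work of Radin and Sadun.
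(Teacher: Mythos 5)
Your opening reduction is correct and well executed: each green or red node is a nonzero scalar times $R_Z(\alpha)$ or $R_X(\alpha)$ in $SU(2)$, so the chain is proportional to the identity iff the alternating product $M$ equals $\pm I$ (equivalently $\mathrm{tr}(M)=\pm 2$), and the theorem becomes the assertion that a ``reduced'' alternating word in rational-angle rotations about the $z$- and $x$-axes is never $\pm I$. Be aware, however, that the paper does not prove this statement at all: Theorem~\ref{thm:radinsadun} is imported verbatim, with citation, from Radin and Sadun \cite{RadinSadun}, and is only \emph{used} in the Euler-equality analysis. So the question is whether your attempt is an independent proof, and it is not: after the reformulation, what you give is a plan. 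The Galois step (i) is fine but only multiplies the hypotheses. The two places where the theorem would actually be proved are left as assertions: for an odd prime $p$ dividing some denominator you claim that a ``dominant term'' valuation analysis of $M_{12}$ at a place above $p$ shows $M_{12}\neq 0$, but this is exactly the nontrivial inductive normal-form/ping-pong argument (of Radin--Sadun, or of \'Swierczkowski-type free-product theorems), and your sketch never indicates where the reducedness hypotheses (no $\alpha_i\in\{0,\pi\}$, no consecutive pair in $\{\pm\pi/2\}$) enter --- yet they must, since $R_Z(\pi)$ conjugates $R_X(\beta)$ to $R_X(-\beta)$ and can collapse a word whose other angles have arbitrary odd denominators. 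For the dyadic case you explicitly defer, writing that ruling out accidental relations at the prime $2$ ``constitutes the technical heart of the cited work.'' That case is not a residual technicality: all the exceptional patterns appearing in the conclusion have denominators dividing $4$, so the dyadic analysis is precisely where the boundary between the allowed relations and genuine collapse is drawn, and any proof must show these relations generate all others.

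In short, as a self-contained proof the proposal has a genuine gap --- both prime-by-prime cases are unproved, the second admittedly so --- and as a justification-by-citation it coincides with what the paper already does, in which case the cyclotomic/Conway--Jones scaffolding is unnecessary. If you want to fill the gap, the missing content is the structure theorem of \cite{RadinSadun} for the group generated by rational rotations about orthogonal axes (an amalgamated free product presentation), from which the ``only the obvious relations'' statement follows; reproducing that argument, especially its treatment of the prime $2$, is the actual work.
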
  
\footnotetext{the last node is green if $n$ is even, and red otherwise.}

Now we can begin the proof of the theorem. We first investigate the case of
equalities involving  only four nodes:

\begin{proposition}
Suppose that the two diagrams
\begin{center}
\begin{tikzpicture}
\begin{pgfonlayer}{nodelayer}
\node [style=none] (0) at (0.5, 1.5) {};
\node [style=rn,label={[label position=right]$\alpha_2$}] (2) at (0.5, 2) {};
\node [style=gn,label={[label position=right]$\alpha_1$}] (3) at (0.5, 2.5) {};
\node [style=none] (4) at (0.5, 3) {};

\node [style=none] (a0) at (2.5, 1.5) {};
\node [style=gn,label={[label position=right]$\beta_2$}] (a2) at (2.5, 2) {};
\node [style=rn,label={[label position=right]$\beta_1$}] (a3) at (2.5, 2.5) {};
\node [style=none] (a4) at (2.5, 3) {};
\end{pgfonlayer}
\begin{pgfonlayer}{edgelayer}
\draw[out=270, in= 90] (2.center) to (0.center);
\draw[out=270, in= 90] (3.center) to (2.center);
\draw[out=270, in= 90] (4.center) to (3.center);
\draw[out=270, in= 90] (a2.center) to (a0.center);
\draw[out=270, in= 90] (a3.center) to (a2.center);
\draw[out=270, in= 90] (a4.center) to (a3.center);
\end{pgfonlayer}
\end{tikzpicture}
\end{center}
represents the same matrix up to a scalar.

Then either ($\alpha_1=\beta_2 = n\pi$ and $\beta_1 = (-1)^{n}
\alpha_2$) or
($\alpha_2=\beta_1 = n\pi$ and $\beta_2 = (-1)^{n}
\alpha_1$).
\end{proposition}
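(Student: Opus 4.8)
The plan is to compute the two matrices explicitly and then reduce the equality to the hypothesis of Theorem~\ref{thm:radinsadun}. Recall that the green node with angle $\alpha$ interpreted as a $1\to 1$ map is $\mathrm{diag}(1, e^{i\alpha})$ and the red node is $H\,\mathrm{diag}(1,e^{i\alpha})\,H$; up to a global scalar these are the usual rotations $R_Z(\alpha)$ and $R_X(\alpha)$ about the $Z$- and $X$-axes of the Bloch sphere. So the left diagram is (up to scalar) $R_X(\alpha_2)R_Z(\alpha_1)$ and the right diagram is $R_Z(\beta_2)R_X(\beta_1)$, and the hypothesis says these two $2\times 2$ matrices are equal up to a nonzero scalar. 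The equality $R_X(\alpha_2)R_Z(\alpha_1) = \lambda\, R_Z(\beta_2)R_X(\beta_1)$ rearranges to
\[
R_Z(-\beta_2)\,R_X(\alpha_2)\,R_Z(\alpha_1)\,R_X(-\beta_1) = \lambda\, \mathbb{I},
\]
i.e. a product of four alternating $Z$/$X$ rotations with angles $-\beta_2, \alpha_2, \alpha_1, -\beta_1$ equals the identity up to a scalar. (Here I must be a little careful about the color pattern and the order of the nodes, but after possibly using the color-swap symmetry and reading the composition in the right order, this is exactly a length-$4$ alternating word in the generators of Theorem~\ref{thm:radinsadun}.)

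Next I would apply Theorem~\ref{thm:radinsadun} to this word of length $4$. Since all the angles $\alpha_1,\alpha_2,\beta_1,\beta_2$ are rational (this proposition lives in the rational-angle part of the argument, as its conclusion feeds into the characterization for arbitrary angles), the theorem applies and gives two cases: either one of the four angles lies in $\{0,\pi\}$, or two cyclically adjacent angles both lie in $\{\pi/2,-\pi/2\}$. I would handle the second case first and show it is impossible here: if two adjacent angles in the word $(-\beta_2,\alpha_2,\alpha_1,-\beta_1)$ are both $\pm\pi/2$, then one checks directly (a short computation with $2\times 2$ matrices, or a Bloch-sphere argument) that the remaining product $R(\cdot)R(\cdot)$ of the other two cannot be a scalar unless those two are also $\pm\pi/2$ — but then the whole word is a Clifford that is not a scalar, contradiction. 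Alternatively, one shows that a scalar multiple of the identity cannot be written as such a $4$-letter word with a $\pm\pi/2$ adjacent pair unless it degenerates; the cleanest route is to just enumerate: fixing the adjacent pair to $\pm\pi/2$ collapses the word to a $2$-letter alternating word $R_Z(a)R_X(b)$ (up to scalar and swapping colors), and $R_Z(a)R_X(b) = \lambda\mathbb{I}$ forces $a,b\in\{0,\pi\}$, putting us back in the first case anyway.

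So the real content is the first case: some angle among $\{-\beta_2,\alpha_2,\alpha_1,-\beta_1\}$ is $\equiv 0 \pmod\pi$. If $\alpha_1 = n\pi$, then $R_Z(\alpha_1) = \pm\mathbb{I}$ (up to scalar), so the word collapses to $R_Z(-\beta_2)R_X(\alpha_2)R_X(-\beta_1) = R_Z(-\beta_2)R_X(\alpha_2-\beta_1)$, and this being scalar forces $\beta_2 \equiv 0$ and $\alpha_2 - \beta_1 \equiv 0 \pmod\pi$; tracking the signs of the collapsed $\pm\mathbb{I}$ factors gives exactly $\beta_1 = (-1)^n\alpha_2$ and $\beta_2 = n'\pi$. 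Rescaling the statement: when $\alpha_1$ or $\beta_2$ is a multiple of $\pi$ I land in the first alternative $(\alpha_1 = \beta_2 = n\pi,\ \beta_1 = (-1)^n\alpha_2)$, and when $\alpha_2$ or $\beta_1$ is a multiple of $\pi$ the symmetric computation (collapsing $R_X(\alpha_2)$ or $R_X(-\beta_1)$) gives the second alternative $(\alpha_2 = \beta_1 = n\pi,\ \beta_2 = (-1)^n\alpha_1)$. The main obstacle is purely bookkeeping: keeping the global scalars and the $(-1)^n$ sign factors straight through each collapse, and making sure the color/order conventions of the diagram match the alternating-word convention of Theorem~\ref{thm:radinsadun}; once that is set up, each case is a one-line matrix identity. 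I would organize the write-up as: (1) translate to the rotation word, (2) invoke Theorem~\ref{thm:radinsadun}, (3) dispose of the $\pm\pi/2$-pair case, (4) the four sub-cases of a $0$-or-$\pi$ angle, each a short computation.
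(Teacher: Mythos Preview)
Your route is genuinely different from the paper's and has two real gaps.

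\textbf{What the paper does.} The paper never invokes Theorem~\ref{thm:radinsadun} here. It simply writes down the two $2\times 2$ matrices
\[
\frac12\begin{pmatrix} 1+e^{i\alpha_2} & (1-e^{i\alpha_2})e^{i\alpha_1}\\ 1-e^{i\alpha_2} & (1+e^{i\alpha_2})e^{i\alpha_1}\end{pmatrix},
\qquad
\frac12\begin{pmatrix} 1+e^{i\beta_1} & 1-e^{i\beta_1}\\ (1-e^{i\beta_1})e^{i\beta_2} & (1+e^{i\beta_1})e^{i\beta_2}\end{pmatrix},
\]
and compares ratios of entries. This is a three-line case split on whether $\alpha_2\in\{0,\pi\}$, and it works for \emph{arbitrary} angles.

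\textbf{First gap: rationality.} The proposition as stated carries no rationality hypothesis, and the paper's proof does not need one. Your argument, however, hinges on Theorem~\ref{thm:radinsadun}, which only applies to angles in $\mathbb{Q}\pi$. You try to justify this by saying ``this proposition lives in the rational-angle part of the argument'', but that is not what is being asked: the proposition is a standalone statement about all angles, and the paper proves it in that generality. If you restrict to rational angles you are proving a strictly weaker statement.

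\textbf{Second gap: the collapse step is wrong as written.} You write ``If $\alpha_1=n\pi$, then $R_Z(\alpha_1)=\pm\mathbb{I}$ (up to scalar), so the word collapses to $R_Z(-\beta_2)R_X(\alpha_2)R_X(-\beta_1)$''. This is false for odd $n$: $R_Z(\pi)$ is (up to phase) the Pauli $Z$, not a scalar, and it does not commute with $R_X$. What actually happens is $Z\,R_X(\theta)=R_X(-\theta)\,Z$, which is exactly the source of the $(-1)^n$ in the conclusion; your ``collapse'' loses that sign and would give $\beta_1=\alpha_2$ regardless of parity. The argument is repairable with this commutation rule, but as stated it is incorrect. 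Your disposal of the $\pm\pi/2$ adjacent-pair case is also only sketched; the claim that it ``collapses to a two-letter word'' needs an explicit Euler-type identity, and in any event once you have the matrices written out, the direct comparison the paper uses is shorter than routing through Radin--Sadun at all.
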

\begin{proof}
  The two diagrams represents respectively the matrices:

  \[
\frac{1}{2}    \begin{pmatrix}
  e^{i\alpha_2} + 1 & (1 - e^{i\alpha_2})e^{i\alpha_1} \\
1 - e^{i\alpha_2} & (1 + e^{i\alpha_2})e^{i\alpha_1} \\    
\end{pmatrix} \]
and
\[
\frac{1}{2}    \begin{pmatrix}
  e^{i\beta_1} + 1 & 1 - e^{i\beta_1}\\
(1 - e^{i\beta_1})e^{i\beta_2}  & (1 + e^{i\beta_1})e^{i\beta_2} \\    
\end{pmatrix}      
\]

\begin{itemize}
\item    Suppose that $\alpha_2 = \pi$. Then we get immediately $\beta_1 =
  \pi$ so that the top-left coefficients are equal.

The quotient of the bottom-left value by the top-right
value should be equal in the two matrices, and we get $\beta_2  = -\alpha_1$.
\item If $\alpha_2 = 0$, we get by a similar argument that $\beta_1 =
  0$ and $\beta_2 = \alpha_1$.
\item  If $\alpha_2 \not \in \{0,\pi\}$, we can look again at the
  two quotients  (as the denominators are nonzero) and we get
  simultaneously $\beta_2 =
  \alpha_1$ and $\beta_2 = -\alpha_1$, and therefore $\beta_2 =
\alpha_1 =  0$ or $\beta_2 = \alpha_1 = \pi$ from which we get the
same conclusions as before by the red-green symmetry.
\end{itemize}
\end{proof}

\begin{proof}[Proof ot the theorem]
It is enough to find all equalities relating rational angles. Take
such an equality. By slightly rewriting it, we see that the diagram
\begin{center}
\begin{tikzpicture}
\begin{pgfonlayer}{nodelayer}
\node [style=none] (0) at (0.5, 1) {};
\node [style=gn,label={[label position=right]$\alpha_3$}] (1) at (0.5, 1.5) {};
\node [style=rn,label={[label position=right]$\alpha_2$}] (2) at (0.5, 2) {};
\node [style=gn,label={[label position=right]$\alpha_1$}] (3) at (0.5, 2.5) {};
\node [style=rn,label={[label position=right]$-\beta_1$}] (a1) at (0.5, 3) {};
\node [style=gn,label={[label position=right]$-\beta_2$}] (a2) at (0.5, 3.5) {};
\node [style=rn,label={[label position=right]$-\beta_3$}] (a3) at (0.5, 4) {};
\node [style=none] (4) at (0.5, 4.5) {};
\end{pgfonlayer}
\begin{pgfonlayer}{edgelayer}
\draw[out=270, in= 90] (1.center) to (0.center);
\draw[out=270, in= 90] (2.center) to (1.center);
\draw[out=270, in= 90] (3.center) to (2.center);
\draw[out=270, in= 90] (a1.center) to (3.center);
\draw[out=270, in= 90] (a2.center) to (a1.center);
\draw[out=270, in= 90] (a3.center) to (a2.center);
\draw[out=270, in= 90] (4.center) to (a3.center);
\end{pgfonlayer}
\end{tikzpicture}
\end{center}
represents the identity matrix upto a scalar.
By Theorem~\ref{thm:radinsadun}, one of the coefficient should be in $\{0,\pi\}$ or
two consecutive coefficients are equal to $\pm \pi/2$.

The rest of the proof is just an enumeration of all possible cases.

\begin{itemize}
\item  $\alpha_1 = n\pi$.
  Then the equation can be rewritten:
\begin{center}
\begin{tikzpicture}
\begin{pgfonlayer}{nodelayer}
\node [style=none] (0) at (1, 1) {};
\node [style=gn,label={[label position=left]$\alpha_3 + n\pi$}] (1) at (1, 1.5) {};
\node [style=rn,label={[label position=left]$(-1)^n \alpha_2$}] (2) at (1, 2.5) {};
\node [style=none] (4) at (1, 3) {};
\node [style=none] (eq) at (1.5, 2) {=};
\node [style=none] (a0) at (2, 1) {};
\node [style=rn,label={[label position=right]$\beta_3$}] (a1) at (2, 1.5) {};
\node [style=gn,label={[label position=right]$\beta_2$}] (a2) at (2, 2) {};
\node [style=rn,label={[label position=right]$\beta_1$}] (a3) at (2, 2.5) {};
\node [style=none] (a4) at (2, 3) {};
\end{pgfonlayer}
\begin{pgfonlayer}{edgelayer}
\draw[out=270, in= 90] (1.center) to (0.center);
\draw[out=270, in= 90] (2.center) to (1.center);
\draw[out=270, in= 90] (4.center) to (2.center);
\draw[out=270, in= 90] (a1.center) to (a0.center);
\draw[out=270, in= 90] (a2.center) to (a1.center);
\draw[out=270, in= 90] (a3.center) to (a2.center);
\draw[out=270, in= 90] (a4.center) to (a3.center);
\end{pgfonlayer}
\end{tikzpicture}
\end{center}
and therefore

\begin{center}
\begin{tikzpicture}
\begin{pgfonlayer}{nodelayer}
\node [style=none] (0) at (1, 1) {};
\node [style=gn,label={[label position=left]$\alpha_3+n\pi$}] (2) at (1, 1.5) {};
\node [style=rn,label={[label position=left]$(-1)^n \alpha_2-\beta_1$}] (3) at (1, 2.5) {};
\node [style=none] (4) at (1, 3) {};
\node [style=none] (eq) at (1.5, 2) {=};

\node [style=none] (a0) at (2, 1) {};
\node [style=rn,label={[label position=right]$\beta_3$}] (a2) at (2, 1.5) {};
\node [style=gn,label={[label position=right]$\beta_2$}] (a3) at (2, 2.5) {};
\node [style=none] (a4) at (2, 3) {};
\end{pgfonlayer}
\begin{pgfonlayer}{edgelayer}
\draw[out=270, in= 90] (2.center) to (0.center);
\draw[out=270, in= 90] (3.center) to (2.center);
\draw[out=270, in= 90] (4.center) to (3.center);
\draw[out=270, in= 90] (a2.center) to (a0.center);
\draw[out=270, in= 90] (a3.center) to (a2.center);
\draw[out=270, in= 90] (a4.center) to (a3.center);
\end{pgfonlayer}
\end{tikzpicture}
\end{center}
upto a scalar. We therefore get the first two cases of the theorem by the previous proposition.

\item The cases $\alpha_3 = n\pi, \beta_1 = n\pi, \beta_3= n\pi$ are
  treated in the same way and give symmetric conditions.

\item The cases $\alpha_2 = n\pi, \beta_2 = n\pi$ can be treated by rewriting the equality in the form
  
\begin{center}
\begin{tikzpicture}
\begin{pgfonlayer}{nodelayer}
\node [style=none] (0) at (1, 1) {};
\node [style=rn,label={[label position=left]$-\beta_3$}] (1) at (1, 1.5) {};
\node [style=gn,label={[label position=left]$\alpha_3$}] (2) at (1, 2) {};
\node [style=rn,label={[label position=left]$\alpha_2$}] (3) at (1, 2.5) {};
\node [style=none] (4) at (1, 3) {};
\node [style=none] (eq) at (1.5, 2) {=};
\node [style=none] (a0) at (2, 1) {};
\node [style=gn,label={[label position=right]$\beta_2$}] (a1) at (2, 1.5) {};
\node [style=rn,label={[label position=right]$\beta_1$}] (a2) at (2, 2) {};
\node [style=gn,label={[label position=right]$-\alpha_1$}] (a3) at (2, 2.5) {};
\node [style=none] (a4) at (2, 3) {};
\end{pgfonlayer}
\begin{pgfonlayer}{edgelayer}
\draw[out=270, in= 90] (1.center) to (0.center);
\draw[out=270, in= 90] (2.center) to (1.center);
\draw[out=270, in= 90] (3.center) to (2.center);
\draw[out=270, in= 90] (4.center) to (3.center);
\draw[out=270, in= 90] (a1.center) to (a0.center);
\draw[out=270, in= 90] (a2.center) to (a1.center);
\draw[out=270, in= 90] (a3.center) to (a2.center);
\draw[out=270, in= 90] (a4.center) to (a3.center);
\end{pgfonlayer}
\end{tikzpicture}
\end{center}
and we recover one of the previous cases.
\item  $\alpha_1 = \pi/2$ and $\alpha_2 = \pi/2$.
  Then the equation can be rewritten:

\begin{center}
\begin{tikzpicture}
\begin{pgfonlayer}{nodelayer}
\node [style=none] (0) at (1, 1) {};
\node [style=gn,label={[label position=left]$\alpha_3-\pi/2$}] (1) at (1, 1.75) {};
\node [style={H box}] (2) at (1, 2.5) {};
\node [style=none] (4) at (1, 3) {};
\node [style=none] (eq) at (1.5, 2) {=};
\node [style=none] (a0) at (2, 1) {};
\node [style=rn,label={[label position=right]$\beta_3$}] (a1) at (2, 1.5) {};
\node [style=gn,label={[label position=right]$\beta_2$}] (a2) at (2, 2) {};
\node [style=rn,label={[label position=right]$\beta_1$}] (a3) at (2, 2.5) {};
\node [style=none] (a4) at (2, 3) {};
\end{pgfonlayer}
\begin{pgfonlayer}{edgelayer}
\draw[out=270, in= 90] (1.center) to (0.center);
\draw[out=270, in= 90] (2.center) to (1.center);
\draw[out=270, in= 90] (4.center) to (2.center);
\draw[out=270, in= 90] (a1.center) to (a0.center);
\draw[out=270, in= 90] (a2.center) to (a1.center);
\draw[out=270, in= 90] (a3.center) to (a2.center);
\draw[out=270, in= 90] (a4.center) to (a3.center);
\end{pgfonlayer}
\end{tikzpicture}
\end{center}
and therefore
\begin{center}
\begin{tikzpicture}
\begin{pgfonlayer}{nodelayer}
\node [style=none] (0) at (1, 1) {};
\node [style=rn,label={[label position=left]$\alpha_3-\pi/2$}] (2) at (1, 2.5) {};
\node [style={H box}] (1) at (1, 1.75) {};
\node [style=none] (4) at (1, 3) {};
\node [style=none] (eq) at (1.5, 2) {=};

\node [style=none] (a0) at (2, 1) {};
\node [style=rn,label={[label position=right]$\beta_3$}] (a1) at (2, 1.5) {};
\node [style=gn,label={[label position=right]$\beta_2$}] (a2) at (2, 2) {};
\node [style=rn,label={[label position=right]$\beta_1$}] (a3) at (2, 2.5) {};
\node [style=none] (a4) at (2, 3) {};
\end{pgfonlayer}
\begin{pgfonlayer}{edgelayer}
\draw[out=270, in= 90] (1.center) to (0.center);
\draw[out=270, in= 90] (2.center) to (1.center);
\draw[out=270, in= 90] (4.center) to (2.center);
\draw[out=270, in= 90] (a1.center) to (a0.center);
\draw[out=270, in= 90] (a2.center) to (a1.center);
\draw[out=270, in= 90] (a3.center) to (a2.center);
\draw[out=270, in= 90] (a4.center) to (a3.center);
\end{pgfonlayer}
\end{tikzpicture}
\end{center}
which gives
\begin{center}
\begin{tikzpicture}
\begin{pgfonlayer}{nodelayer}
\node [style=none] (0) at (1, 1) {};
\node [style=rn,label={[label position=left]$\pi/2$}] (1) at (1, 1.5) {};
\node [style=gn,label={[label position=left]$\pi/2$}] (2) at (1, 2) {};
\node [style=rn,label={[label position=left]$\alpha_3$}] (3) at (1, 2.5) {};
\node [style=none] (4) at (1, 3) {};
\node [style=none] (eq) at (1.5, 2) {=};

\node [style=none] (a0) at (2, 1) {};
\node [style=rn,label={[label position=right]$\beta_3$}] (a1) at (2, 1.5) {};
\node [style=gn,label={[label position=right]$\beta_2$}] (a2) at (2, 2) {};
\node [style=rn,label={[label position=right]$\beta_1$}] (a3) at (2, 2.5) {};
\node [style=none] (a4) at (2, 3) {};
\end{pgfonlayer}
\begin{pgfonlayer}{edgelayer}
\draw[out=270, in= 90] (1.center) to (0.center);
\draw[out=270, in= 90] (2.center) to (1.center);
\draw[out=270, in= 90] (4.center) to (3.center);
\draw[out=270, in= 90] (3.center) to (2.center);
\draw[out=270, in= 90] (a1.center) to (a0.center);
\draw[out=270, in= 90] (a2.center) to (a1.center);
\draw[out=270, in= 90] (a3.center) to (a2.center);
\draw[out=270, in= 90] (a4.center) to (a3.center);
\end{pgfonlayer}
\end{tikzpicture}
\end{center}
Finally:
\begin{center}
\begin{tikzpicture}
\begin{pgfonlayer}{nodelayer}
\node [style=none] (0) at (1, 1.5) {};
\node [style=gn,label={[label position=left]$\pi/2$}] (2) at (1, 2) {};
\node [style=rn,label={[label position=left]$\alpha_3-\beta_1$}] (3) at (1, 3) {};
\node [style=none] (4) at (1, 3.5) {};
\node [style=none] (eq) at (1.5, 2) {=};

\node [style=none] (a0) at (2, 1.5) {};
\node [style=rn,label={[label position=right]$\beta_3-\pi/2$}] (a1) at (2, 2) {};
\node [style=gn,label={[label position=right]$\beta_2$}] (a2) at (2, 3) {};
\node [style=none] (a4) at (2, 3.5) {};
\end{pgfonlayer}
\begin{pgfonlayer}{edgelayer}
\draw[out=270, in= 90] (2.center) to (0.center);
\draw[out=270, in= 90] (4.center) to (3.center);
\draw[out=270, in= 90] (3.center) to (2.center);
\draw[out=270, in= 90] (a1.center) to (a0.center);
\draw[out=270, in= 90] (a2.center) to (a1.center);
\draw[out=270, in= 90] (a4.center) to (a2.center);
\end{pgfonlayer}
\end{tikzpicture}
\end{center}
and we can apply the previous proposition. We obtain the fourth case
of the theorem for $n = 0$ (notice that $(-1)^m \pi/2 =
m\pi+\pi/2$).
\item $\alpha_1 = -\pi/2$ and $\alpha_2 = -\pi/2$ is treated in a
  similar way and gives use the fourth case with $n  = 1$.

\item $\alpha_1 = -\pi/2$ and $\alpha_2 = \pi/2$.
The same computation gives us:
\begin{center}
\begin{tikzpicture}
\begin{pgfonlayer}{nodelayer}
\node [style=none] (0) at (1, 1) {};
\node [style=rn,label={[label position=left]$\pi/2$}] (1) at (1, 1.5) {};
\node [style=gn,label={[label position=left]$\pi/2$}] (2) at (1, 2) {};
\node [style=rn,label={[label position=left]$\alpha_3$}] (3) at (1, 2.5) {};
\node [style=gn,label={[label position=left]$\pi$}] (4) at (1, 3) {};
\node [style=none] (5) at (1, 3.5) {};
\node [style=none] (eq) at (1.5, 2) {=};
\node [style=none] (a0) at (2, 1) {};
\node [style=rn,label={[label position=right]$\beta_3$}] (a1) at (2, 1.5) {};
\node [style=gn,label={[label position=right]$\beta_2$}] (a2) at (2, 2) {};
\node [style=rn,label={[label position=right]$\beta_1$}] (a3) at (2, 2.5) {};
\node [style=none] (a4) at (2, 3) {};
\end{pgfonlayer}
\begin{pgfonlayer}{edgelayer}
\draw[out=270, in= 90] (1.center) to (0.center);
\draw[out=270, in= 90] (2.center) to (1.center);
\draw[out=270, in= 90] (4.center) to (3.center);
\draw[out=270, in= 90] (5.center) to (4.center);
\draw[out=270, in= 90] (3.center) to (2.center);
\draw[out=270, in= 90] (a1.center) to (a0.center);
\draw[out=270, in= 90] (a2.center) to (a1.center);
\draw[out=270, in= 90] (a3.center) to (a2.center);
\draw[out=270, in= 90] (a4.center) to (a3.center);
\end{pgfonlayer}
\end{tikzpicture}
\end{center}
and finally:
\begin{center}
\begin{tikzpicture}
\begin{pgfonlayer}{nodelayer}
\node [style=none] (0) at (1, 1.5) {};
\node [style=gn,label={[label position=left]$-\pi/2$}] (2) at (1, 2) {};
\node [style=rn,label={[label position=left]$-\alpha_3-\beta_1$}] (3) at (1, 3) {};
\node [style=none] (4) at (1, 3.5) {};
\node [style=none] (eq) at (1.5, 2) {=};

\node [style=none] (a0) at (2, 1.5) {};
\node [style=rn,label={[label position=right]$\beta_3-\pi/2$}] (a1) at (2, 2) {};
\node [style=gn,label={[label position=right]$\beta_2$}] (a2) at (2, 3) {};
\node [style=none] (a4) at (2, 3.5) {};
\end{pgfonlayer}
\begin{pgfonlayer}{edgelayer}
\draw[out=270, in= 90] (2.center) to (0.center);
\draw[out=270, in= 90] (4.center) to (3.center);
\draw[out=270, in= 90] (3.center) to (2.center);
\draw[out=270, in= 90] (a1.center) to (a0.center);
\draw[out=270, in= 90] (a2.center) to (a1.center);
\draw[out=270, in= 90] (a4.center) to (a2.center);
\end{pgfonlayer}
\end{tikzpicture}
\end{center}
And we apply again the proposition.
\item The case $\{\alpha_2, \alpha_3\} \subseteq \{ \pi/2, -\pi/2\}$
  is done in the same way.
  The case $\{\alpha_1, \beta_1\} \subseteq \{ \pi/2, -\pi/2\}$ is
  done by rewriting the equality as before.
\end{itemize}  

\end{proof}

\section{Further work}

The work that has been done for the Euler equalities can be also done
for the rule $(A)$ itself, to find for which angles rule $(A)$ is
actually provable without rule $(A)$. The technical condition below
rule $(A)$ can be rewritten as a sum of 6 exponential terms, and
has been investigated previously by Mann \cite{Mann} and Conway and
Jones\cite{CJ} (see in particular Theorem 6).
The most complicated equality with rational angles uses angles of the
form $\pi/3$ and $\pi/5$.

Replacing the rule $CYC_p$ (or the metarule) by the rule $SUP_p$ gives
a nicer set of rules for the rational fragment of the ZX-calculus.
What remains is to find a nicer replacement for rule $(A)$ for the
full fragment.
The obvious candidate is of course Euler equalities, but we do not
know if adding Euler equalities to the rule of $ZX$ is sufficient to
prove rule $(A)$ in its full generality.


\end{document}